\newcommand{\Tcp}{\mathbf{T}_{\text{cp}}}
\newcommand{\IeeeTIT}{{\em IEEE Trans.\ Inf.\ Theory\/}}
\newcommand{\IeeeTSP}{{\em IEEE Trans.\ Signal Process.\/}}
\newcommand{\IeeeSPLETT}{{\em IEEE Signal Process.\ Lett.\/}}
\newcommand{\IeeeTWC}{{\em IEEE Trans.\ Wireless Commun.\/}}
\newcommand{\IeeeJSAC}{{\em IEEE J.\ Select.\ Areas Commun.\/}}
\newcommand{\IeeeTVT}{{\em IEEE Trans.\ Veh.\ Technol.\/}}
\newcommand{\IeeeSPMAG}{{\em IEEE Signal Processing Magazine\/}}
\newtheorem{lemma}{Lemma}
\newtheorem{definition}{Definition}
\def\a{{\mathbf a}}
\def\J{{\mathbf J}}
\def\A{{\mathbf A}}
\def\B{{\mathbf B}}
\def\R{{\mathbf R}}
\def\F{{\mathbf F}}
\def\G{{\mathbf G}}
\def\Sf{{\mathbf S}_{\text{f}}}
\def\Sb{{\mathbf S}_{\text{b}}}
\def\bOmega{{\bm \Omega}}
\def\H{{\mathbf H}}
\def\Acal{\bm{\cal A}}
\def\Acalover{{\bm{\cal B}}}
\def\Ccal{{\bm{\cal C}}}
\def\Hcal{\bm{\cal H}}
\def\Gcal{\bm{\cal G}}
\def\Fcal{\bm{\cal F}}
\def\Dcal{\bm{\cal D}}
\def\V{{\mathbf V}}
\def\X{{\mathbf X}}
\def\x{{\mathbf x}}
\def\h{{\mathbf h}}
\def\y{{\mathbf y}}
\def\z{{\mathbf z}}
\def\r{{\mathbf r}}
\def\v{{\mathbf v}}
\def\u{{\mathbf u}}
\def\f{{\mathbf f}}
\def\xsuuc{\x_{\SU,\text{I}}}
\def\xsuvc{\x_{\SU,\text{II}}}
\def\expect{{\mathsf{E}}}
\def\PU{\text{PU}}
\def\SU{\text{SU}}
\def\Prob{\mathsf{P}}
\def\PPU{\mathsf{P}_{\PU}}
\def\PSU{\mathsf{P}_{\SU}}
\def\SNR{\mathsf{SNR}}
\def\ASNR{\mathsf{ASNR}}
\newcommand{\Mvc}{{M_{\text{vc}}}}
\newcommand{\Ob}{\mathbf{O}}
\newcommand{\ob}{\mathbf{0}}
\newcommand{\IPUvc}{{\EuScript{I}_{\text{PU},\text{vc}}}}
\newcommand{\IPUuc}{{\EuScript{I}_{\text{PU},\text{uc}}}}
\newcommand{\Lcppu}{{L_{\text{cp}}}}
\newcommand{\Lpre}{{L_\text{SU}}}
\newcommand{\eqdef}{\triangleq}
\newcommand{\bm}[1]{{\boldsymbol{#1}}}
\newcommand{\setC}{\mathbb{C}}
\newcommand{\setR}{\mathbb{R}}
\newcommand{\setZ}{\mathbb{Z}}
\newcommand{\setF}{\mathbb{F}}
\newcommand{\Widft}{\mathbf{W}_{\text{IDFT}}}
\newcommand{\Wdft}{\mathbf{W}_{\text{DFT}}}
\newcommand{\Widftdown}{\overline{\mathbf{W}}_{\text{IDFT}}}
\newcommand{\I}{\mathbf{I}}
\newcommand{\C}{\mathbf{C}}
\newcommand{\diag}{{\text{diag}}}
\newcommand{\herm}{\text{H}}
\newcommand{\trasp}{\text{T}}
\newcommand{\rank}{{\text{rank}}}
\newcommand{\Rvvpu}{\mathbf{R}_{\v_{\PU}}}
\newcommand{\Rvvsu}{\mathbf{R}_{\v_{\SU}}}
\newcommand{\Fxsu}{\Fcal}
\def\CPUd{\mathsf{C}_{\PU, \text{direct}}}
\def\ASNRd{\ASNR_{13, \text{direct}}}
\def\ASNRdd{\ASNR_{24, \text{direct}}}
\def\ASNRddd{\ASNR_{14, \text{direct}}}
\def\MI{\mathsf{I}}
\def\Prob{\mathsf{Prob}}
\def\CPU{\mathsf{C}_{\PU}}
\def\CPUworst{\mathsf{C}_{\PU,\text{worst}}}
\def\CPUlower{\mathsf{C}_{\PU,\text{lower}}}
\def\PPUout{\mathsf{Prob}_{\PU,\text{out},m}}
\def\CSU{\mathsf{C}_{\SU}}
\def\CSUlower{\mathsf{C}_{\SU, \text{lower}}}
\def\CSUloweropt{\mathsf{C}_{\SU, \text{lower},\text{CSIT}}}
\def\CSUlowerunif{\mathsf{C}_{\SU, \text{lower},\text{NOCSIT}}}
\def\C124{\mathsf{C}_{\PU \to \SU}}
\def\ASNRP{\Gamma_{3, m}}
\def\ASNRS{\Gamma_{4, m}}
\def\bdm#1\edm{\begin{displaymath}#1\end{displaymath}}
\def\be#1\ee{\begin{equation}#1\end{equation}}
\def\barr#1\earr{\begin{align}#1\end{align}}
\newcommand{\sfootnote}[1]{{\footnote{\setlength{\baselineskip}{4.0mm}#1}}}
\begin{document}
\title{
\setlength{\baselineskip}{10.0mm}
Convolutive superposition for multicarrier cognitive radio systems}

%\author{Donatella Darsena,~\IEEEmembership{Member,~IEEE,} 
%         Giacinto Gelli, and \\
%        Francesco Verde,~\IEEEmembership{Senior Member,~IEEE}
%
\author{Donatella Darsena, Giacinto Gelli, and Francesco Verde
\thanks{
\setlength{\baselineskip}{4.0mm}
D.~Darsena is with the Department of Engineering,
Parthenope University, Naples I-80143, Italy (e-mail: darsena@uniparthenope.it).
G.~Gelli and F.~Verde are with the
Department of Electrical Engineering and
Information Technology, University Federico II, Naples I-80125,
Italy [e-mail: (gelli,f.verde)@unina.it].}
}

\maketitle
\vspace{-10mm}
\begin{abstract}

\setlength{\baselineskip}{5.0mm}
Recently, we proposed a spectrum-sharing
paradigm for single-carrier cognitive radio (CR) networks,
where a secondary user (SU) is able to maintain or
even improve the performance of
a primary user (PU) transmission, while also obtaining
a low-data rate channel for its own communication.
According to such a scheme, a simple multiplication 
is used to superimpose one SU symbol
on a block of multiple PU symbols.
The scope of this paper is to extend
such a paradigm to a multicarrier
CR network, where the PU employs an
orthogonal frequency-division multiplexing
(OFDM) modulation scheme.
To improve
its achievable data rate,
besides transmitting over the subcarriers unused by the PU,
the SU is also allowed to transmit multiple block-precoded symbols in parallel
over the OFDM subcarriers used by the primary system. 
Specifically, the SU convolves  its block-precoded symbols with the
received PU data in the time-domain,  
which gives rise to the term {\em convolutive  superposition}.
An information-theoretic
analysis of the proposed scheme is developed, which
considers different amounts of network state information
at the secondary transmitter, as well as 
different precoding strategies for the SU.
Extensive simulations illustrate the merits of our
analysis and designs, in comparison with conventional CR schemes, 
by considering as performance indicators
the ergodic capacity of the considered systems.

\end{abstract}

\vspace{-5mm}
\begin{IEEEkeywords}
\vspace{-5mm}
Cognitive radio, channel capacity,
multicarrier modulation, superposition,
precoding design.
\end{IEEEkeywords}

\vspace{-5mm}

\section{Introduction}

Due to the explosive growth in wireless data services, 
mainly driven by video communications, 
next-generation wireless systems will require 
significant advances \cite{Andrews2014,Boccardi2014} in terms of
data-rate, latency, and energy consumptions, 
as well as improved networking
and resource allocation procedures. 
Moreover, according to the emerging 
``Internet of Things'' (IoT) paradigm 
and the diffusion of machine-to-machine communications, 
next-generation wireless systems must be able to support an 
enormous number of  low-rate devices,
which will require new approaches and policies for spectrum allocation and
management, including new forms of {\em spectrum sharing}, where
{\em cognitive radio} (CR) approaches \cite{Haykin2005,Biglieri_book} are 
expected to play a major role.
In CR techniques, secondary users (SUs) share 
a portion of the spectrum with licensed or unlicensed primary users (PUs).
Such an approach is beneficial, e.g., for ultradense wireless systems \cite{Wang2014}, where medium-to-low-rate SU terminals might share the spectrum 
with high-rate PU devices.

The cognitive radio approach stems from the fact that
a major part of the licensed and unlicensed  spectrum 
is typically unused for significant periods of time,
so called {\em spectrum holes} or {\em white spaces}.
Therefore, a simple opportunistic access paradigm consists
of allowing the SUs to transmit in an orthogonal fashion 
(space, time or frequency) relative to the PU signals,
which will be referred to as {\em orthogonal CR (OCR)}.
\cite{Haykin2005,Biglieri_book}.
Such an approach requires a possible multidimensional 
space-time-frequency detection of PU users,
called {\em spectrum sensing} \cite{Haykin2005,Biglieri_book}.  
However, accurate detection of a vacant spectrum is not an
easy  task \cite{Liang2008}. Moreover, next-generation wireless systems
mandate non-orthogonal primary and secondary transmissions \cite{Andrews2014},
which will be referred to as {\em non-orthogonal CR (NOCR)}.

There are two different visions in CR to accomplish spectrum sharing
on a non-orthogonal basis \cite{Haykin2005,Biglieri_book}: 
(i) SUs can share PU  communications resources, provided that they keep the interference to PU transmissions (so called {\em interference temperature} \cite{Haykin2005}) below a very low threshold;
(ii) sophisticated encoding and decoding techniques are used 
to remove all (or part of) the mutual interference between
PU and SU transmissions  \cite{Yu,Khisti, Devroye, Jovi},
in order to relax the threshold on the SU transmission powers.
In the former paradigm, one of the major problem is
to determine the interference level a secondary transmitter 
causes to a primary receiver; in the latter one, sophisticated 
encoding techniques like dirty paper coding (DPC) \cite{Costa}
require {\em a priori} knowledge of the primary
user's transmitted data and/or how this 
sequence is encoded (codebook). 
The underlying common feature of both approaches
is the {\em additive superposition} of PU and SU transmissions
(additive interference channel \cite{Devroye}), 
i.e, PU and SU signals add up.

Recently, we have proposed in \cite{Verde1,Verde2} a different 
NOCR paradigm where the arithmetic operation of multiplication 
is used to superimpose a single SU symbol on a primary signal  
composed of multiple PU symbols, through a single-channel
amplify-and-forward (AF) protocol. The main advantages of such a scheme
can be summarized as follows: (i) under non-restrictive
conditions \cite{Verde2}, the SU can transmit without a power constraint,
while keeping the desirable property of not  degrading (but even improving) 
the PU performance; (ii) {\em a priori} knowledge of the PU data 
at the SU is not required.
However, the main limitation of the single-channel approach in \cite{Verde1,Verde2} is that only low-data rates can  be achieved 
by the SU.

The aim of this paper is to improve the achievable data rates of the
SU by introducing the concept of 
{\em convolutive  superposition}, whereby the SU data are superimposed 
on the PU received signal by means of a time-domain convolution.  
Such a new code construction extends the 
multiplicative superposition scheme of \cite{Verde1,Verde2}
along three lines:\sfootnote{Preliminary results of such an 
extension are reported in \cite{Verde3}.} 

\begin{enumerate}[1)]

\itemsep=0mm

\item
We consider a CR system where modulation is based on 
{\em orthogonal frequency-division 
multiplexing (OFDM)} \cite{Stevenson2009}, due to its advantages in multipath resistance, performance, spectral efficiency, flexibility, and
computational complexity. The multicarrier nature of the PU transmission allows the SU to be active over multiple primary subchannels, thereby attaining larger transmission rates compared to the single-channel scheme considered in \cite{Verde1,Verde2}.

\item
For each PU subchannel, we allow the SU to transmit multiple symbols within a single OFDM symbol interval of the primary system, by jointly exploiting both white spaces (i.e., unused subcarriers of the PU signal) and dirty spaces (i.e., subcarriers used by the PU). 

\item
The multi-channel multi-symbol nature of
the secondary transmission introduces 
additional   degrees
of   freedom   with   respect   to   the single-channel approach
of \cite{Verde1,Verde2}: the distribution of the available power over the transmit
dimensions. In this regard, we develop {\em precoding} strategies for the 
SU transmission by considering either the case when channel state information (CSI) is available at the secondary transmitter or this knowledge is missing. 

\end{enumerate}
The theoretical performance analysis of the proposed scheme is
based on input-output mutual information and ergodic capacity of both
the primary and secondary systems.\sfootnote{The ergodic 
capacity serves as a useful upper
bound on the performance of any communication system
and it is to some extent amenable to analytic studies;
it can be achieved if the length of
the codebook is long enough to reflect the
ergodic nature of fading
(i.e.,  the transmission duration of the
codeword is much greater than the
channel coherence time) \cite{Biglieri}.
At rates lower than the ergodic capacity, there exist coding strategies ensuring that
the average bit error rate (BER) decays exponentially with the codebook length \cite{Gallager_book, CoverThomas_book}.}
Results of comparisons studies with other CR approaches 
are also reported in terms of ergodic capacity.

The paper is organized as follows.  The system model and the considered 
communication scheme are described in Section~\ref{sec:model}.
The capacity analysis for the PU is carried out in Section~\ref{sec:analysis-pu}.
The information-theoretic analysis and precoding designs for the SU
are developed in Section~\ref{sec:analysis-su}, by considering different amounts 
of CSI at both ends of the communication link.
Numerical results are reported in Section~\ref{sec:simul}, aimed at
corroborating our theoretical findings. Finally,  the main results obtained in the paper are summarized in Section~\ref{sec:concl}.

\begin{figure}[tbp]
\centering
\psfrag{n1\r}{$\,1$}
\psfrag{n2\r}{$\,2$}
\psfrag{n3\r}{$\,3$}
\psfrag{n4\r}{$\,4$}
\psfrag{h12\r}{$\H_{12}$}
\psfrag{h13\r}{$\H_{13}$}
\psfrag{h14\r}{$\H_{14}$}
\psfrag{h23\r}{$\H_{23}$}
\psfrag{h24\r}{$\H_{24}$}
\psfrag{PTx\r}{PTx}
\psfrag{PRx\r}{PRx}
\psfrag{STx\r}{STx}
\psfrag{SRx\r}{SRx}
\psfrag{angsu\r}{$\vartheta$}
\includegraphics[width=0.70\linewidth, trim = 0 0 0 0]{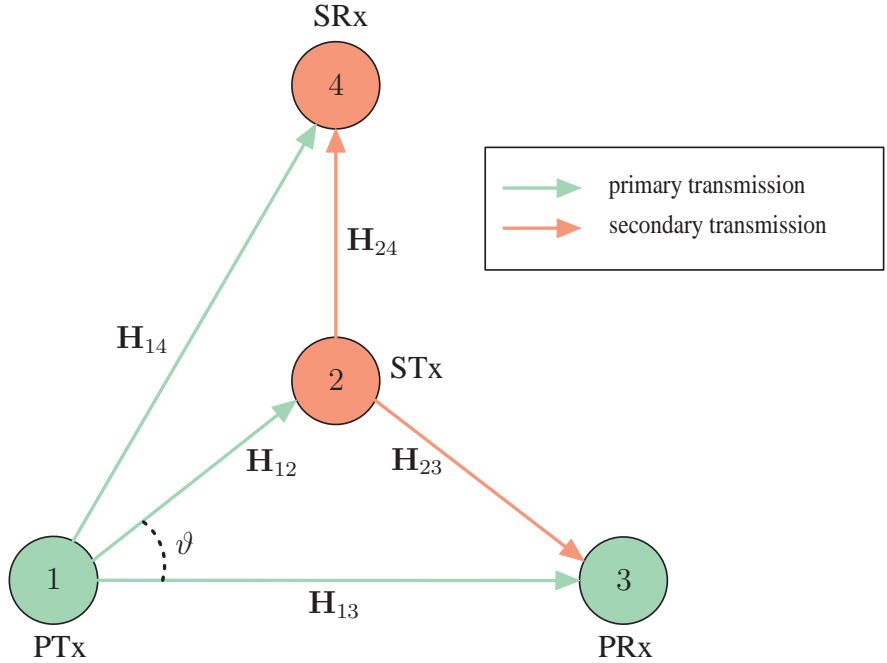}
\caption{
The wireless network model: in green,
the PU transmitter/receiver nodes, in red
the SU transmitter/receiver nodes.
}
\label{fig:figure_1}
\end{figure}

\section{The considered cognitive radio system model}
\label{sec:model}

We consider a multicarrier cognitive network (see Fig.~\ref{fig:figure_1}) 
composed  by a 
primary transmitter-receiver pair (nodes PTx and PRx) and one
secondary transmitter-receiver pair (nodes STx and SRx).
The nodes have a single antenna and 
operate in half-duplex mode, except for
the STx which is  equipped  with two antennas (one receive 
antenna and one transmit antenna) that enable a 
full-duplex operation.\sfootnote{We assume that
the transmit chain of the STx is adequately isolated 
from its receive chain \cite{Bliss}, such that self-interference 
is negligible in the receive chain circuitry.}
The PU employs OFDM
modulation with 
$M$ subcarriers, a cyclic 
prefix (CP) of length $\Lcppu < M$, and 
symbol period $T_{\PU} \eqdef P \, T_c$, where
$P \eqdef M + \Lcppu$ and $T_c$ denotes the 
sampling period of the PU system.
Only $Q$ out of the $M$ available subcarriers are utilized, 
whereas the remaining
$\Mvc \eqdef M-Q > 0$ ones are unmodulated and 
called {\em virtual subcarriers (VCs)}.
The STx exploits the PU transmission to deliver
to the SRx its own data, by simultaneously transmitting over
the same subcarriers of the PU, as described
in Subsection~\ref{sec:STx-trans}.
It is assumed that the SU perfectly knows the allocation of the VCs
within the PU frequency range.\sfootnote{Such a knowledge
can be available {\em a priori} or obtained, e.g., by means of spectrum sensing 
techniques \cite{Haykin2005}.}

\subsection{General assumptions regarding channels and noise}
\label{sec:assumpt}

During an interval of duration $T_{\PU}$,
the wireless channel between the pair of nodes 
$(i,j)$, for $i \in \{1,2\}$ and $i \neq j \in \{2,3,4\}$, 
 is modeled as a causal linear time-invariant (LTI) system
spanning at the most $L_{ij}>0$
sampling interval of the PU, i.e., its discrete-time impulse response
obeys $\widetilde{h}_{ij}( \ell) \equiv 0$ 
for $ \ell \not \in \{0,1,\ldots, L_{ij}\}$.
Such an impulse response
is fixed during the transmission of
one OFDM symbol, but is allowed to independently change
from one symbol to another.\sfootnote{For simplicity's sake, 
we will not explicitly indicate the dependence of the channels' parameters 
on the PU symbol period.}  
In the sequel, we will denote with $\theta_{ij} \ge 0$ the integer time offset (TO)
characterizing the $i \to j$ link (encompassing both 
the propagation delay of the wireless link and the processing time at node $i$),  
which models the fact that
the receiver $j$ does not know where the 
multicarrier blocks transmitted by node $i$ start.\sfootnote{The fractional  
TO is incorporated as part of $\{\widetilde{h}_{ij}(\ell)\}_{\ell=0}^{L_{ij}}$.}
We will assume that, for each $i \to j$ link,  
the sum of the channel order and the TO turns 
out to be within one PU symbol, 
i.e., $L_{ij}+\theta_{ij} \le P-1$,  such that 
the desired block received by node $j$ is impaired only 
by the interblock interference (IBI) of the previous block. 
We also assume that  each node is
able to align its local oscillator to the
carrier frequency of the received signal
with negligible error. 
Hereinafter, with reference to a single PU symbol period,
the frequency-domain channel matrix\sfootnote{An $n \times m$ matrix and a column vector 
over the field
$\setF$ are denoted
as $\A \in \setF^{n \times m}$ and $\a \in \setF^{n}$,
respectively; common fields are those of complex, real,
and integer numbers, denoted
with $\setC$, $\setR$, and $\setZ$,
respectively;
$\A^\trasp$, $\A^\herm$, $\A^{-1}$, $\A^{\dag}$, and $\A^{-}$ denote
the transpose, the conjugate transpose,
the inverse, the Moore-Penrose generalized
inverse \cite{Penrose}, and the generalized
(1)-inverse \cite{Penrose} of $\A$, respectively;
$\ob_{m} \in \setR^{m}$,
$\Ob_{m \times n} \in \setR^{m \times n}$,
and $\I_m \in \setR^{m \times m}$ denote
the zero vector, the zero
matrix, and the identity matrix, respectively;
for $\a \in \setC^m$,
$\A = \diag(\a) \in \setC^{m \times m}$ denotes the diagonal matrix
whose diagonal elements are the entries of $\a$;
$\Sf \in \setR^{n \times n}$ and $\Sb \in \setR^{n \times n}$
denote the Toeplitz ``forward shift" and ``backward shift"
matrices \cite{Horn}, respectively, 
where the first column of $\Sf$ and the first row 
of $\Sb$ are given by $[0, 1, 0,  \ldots, 0]^T$
and $[0, 1, 0, \ldots, 0]$, respectively;
$\Vert \a \Vert$ is the Euclidean norm of $\a \in \setC^m$;
$\rank(\B)$ is the rank of $\B \in \setC^{m \times n}$;
$\det(\B)$ denotes the determinant
of $\B\in \setC^{m \times m}$;
$\Prob(A)$ denotes the probability of the event $A$;
the operator $\expect[\cdot]$ denotes ensemble mean
and $\expect[\cdot \,|\, A]$ is the conditional mean given
the event $A$;
finally,  $x^{+} \eqdef \max(x,0)$.
}
\be
\Hcal_{ij} \eqdef
\diag[H_{ij}(0), H_{ij}(1), \ldots, H_{ij}(M-1)]
\in \setC^{M \times M}
\label{eq:H-def}
\ee
with 
\be
H_{ij}(m) \eqdef e^{-j \frac{2 \pi}{M} \theta_{ik} m}
\sum_{\ell=0}^{L_{ik}} \widetilde{h}_{ij}(\ell) \, e^{-j \frac{2 \pi}{M} \ell m} \:,
\quad
\text{for $m \in \mathcal{M} \eqdef \{0,1,\ldots, M-1\}$}
\label{eq:single-DFT}
\ee
collects the $M$-point discrete Fourier transform (DFT) of the {\em extended} channel
impulse response $\widetilde{h}_{ij}(\ell-\theta_{ij})$ corresponding
to the $i \to j$ link, with
$\Hcal_{i_1j_1}$ statistically independent
of $\Hcal_{i_2j_2}$ for $i_1 \neq i_2$ and 
$j_1 \neq j_2$; moreover,
the diagonal entries of
${\Hcal}_{ij}$ are independent and identically  distributed (i.i.d.)
zero-mean circularly symmetric complex
Gaussian (ZMCSCG) random variables (RVs) having variance
$\sigma_{ij}^2$, which
depends on the average
path loss associated to the underlying link.
In the PU symbol period $[n T_{\PU}, (n+1) T_{\PU})$, 
with $n \in \setZ$, the vector
$\widetilde{\v}_{j}(n) \in \setC^{P}$
models the thermal noise at the $j$th receiver, 
with $j \in \{2,3,4\}$.
We assume that $\widetilde{\v}_{j}(n)$ is a ZMCSCG random vector, with
correlation matrix $\expect[\widetilde{\v}_{j}(n) \,  \widetilde{\v}_{j}^\herm(n)] = \sigma^2_{v_j} \, \I_{P}$ and $\widetilde{\v}_{j_1}(n_1)$ statistically independent
of $\widetilde{\v}_{j_2}(n_2)$ for $j_1 \neq j_2$ and $n_1 \neq n_2$.
Finally, channel matrices,
data transmitted by PU and SU,
and noise vectors are statistically independent random objects.

\subsection{Signal transmitted by the PTx}

During the PU symbol period $[n T_{\PU}, (n+1) T_{\PU})$,
the PTx transmits a  
frequency-domain symbol block 
$\x_{\PU}(n) \eqdef [x_{\PU}^{(0)}(n), x_{\PU}^{(1)}(n), \ldots, 
x_{\PU}^{(Q-1)}(n)]^\trasp \in \setC^{Q}$ of $Q$ symbols, modeled as
i.i.d. zero-mean circularly symmetric complex RVs with variance
$\PPU$,
where $\PPU>0$ is the PU power budget.
We assume that CSI 
is not available at the PTx and, hence,
$\PPU$ is
uniformly allocated across all data subcarriers.
Vector $\x_{\PU}(n)$ is augmented by
VCs insertion in arbitrary
positions $\IPUvc \eqdef \{q_0, q_1, \ldots, q_{\Mvc-1}\}$,
thus obtaining the block 
$\bm{\Theta} \, \x_{\PU}(n)$, with
$\bm{\Theta} \in \setR^{M \times Q}$ modeling
VCs insertion.
Matrix $\bm{\Theta}$ or, equivalently, the set $\IPUvc$ 
can be statically specified 
by the standard, or it can be dynamically adjusted 
to select the best $Q$ available subcarriers, i.e., those  with  the  highest
signal-to-noise ratios (SNRs).
Then, the block $\bm{\Theta} \, \x_{\PU}(n)$ is subject to conventional
OFDM processing, encompassing
$M$-point inverse discrete Fourier
transform (IDFT), followed by CP insertion, thus obtaining
(see, e.g., \cite{Wang})
$\widetilde{\u}_{\PU}(n) = \Tcp \, \Widft \, \bm{\Theta} \, \x_{\PU}(n)$,
where 
$\Tcp \eqdef [\I_{\text{cp}}^\trasp, \I_M]^\trasp \in \setR^{P \times M}$,
with $\I_{\text{cp}} \in \setR^{\Lcppu \times M}$
obtained from $\I_M$ by picking its last $\Lcppu$ rows, 
and $\Widft \in \setC^{M \times M}$ is
the unitary symmetric IDFT matrix \cite{Wang}.
The entries of $\widetilde{\u}_{\PU}(n)$ are subject to 
digital-to-analog (D/A) plus radio-frequency (RF) conversion for transmission over the wireless channel.

\subsection{Signal transmitted by the STx}
\label{sec:STx-trans}

Let $\widetilde{y}_{2}^{(p)}(n)$ denote 
the baseband-equivalent $p$th sample received 
by the STx within the $n$th PU symbol period,
for $p \in \mathcal{P} \eqdef \{0,1,\ldots, P-1\}$.
By gathering such samples in the vector $\widetilde{\y}_2(n) \eqdef [\widetilde{y}_2^{(0)}(n), \widetilde{y}_2^{(1)}(n), 
\ldots, \widetilde{y}_2^{(P-1)}(n)]^\trasp \in \setC^{P}$, the received signal can be 
expressed as
\be
\widetilde{\y}_2(n) = \widetilde{\H}_{12}^{(0)} \, \widetilde{\u}_{\PU}(n)
+ \widetilde{\H}_{12}^{(1)} \, \widetilde{\u}_{\PU}(n-1) + \widetilde{\v}_{2}(n)
\label{eq:vetr2}
\ee
where we remember that $\widetilde{\v}_{2}(n)$ is the noise vector, whereas  
\barr
\widetilde{\H}_{12}^{(0)} & \eqdef \sum_{\ell=0}^{L_{12}} \widetilde{h}_{12}(\ell) \, \Sf^{\ell+ \theta_{12}} \in \setC^{P \times P}
\label{eq:C120}
\\
\widetilde{\H}_{12}^{(1)} & \eqdef \sum_{\ell=0}^{L_{12}} \widetilde{h}_{12}(\ell) \, \Sb^{P-\ell- \theta_{12}} \in \setC^{P \times P}
\label{eq:C121}
\earr
are Toeplitz lower- and upper-triangular matrices, respectively.

In the proposed spectrum sharing scheme, the STx exploits the $n$th PU transmission to deliver to the SRx
a frequency-domain block $\x_{\SU}(n) \eqdef [x_{\SU}^{(0)}(n), 
x_{\SU}^{(1)}(n), \ldots, x_{\SU}^{(N+\Mvc-1)}(n)]^\trasp \in \setC^{N+\Mvc}$, which is composed of $N+\Mvc$ symbols, modeled as 
i.i.d. zero-mean
unit-variance circularly symmetric complex RVs.
It is assumed that $N +\Mvc \le M$ and, thus, the {\em rate} of the
SU is $N +\Mvc$ symbols per OFDM block of the PU.
Specifically, the block 
$\widetilde{\z}_2(n) \in \setC^{P}$, 
transmitted by the STx during the 
$n$th PU symbol period, is composed of two summands
$\widetilde{\z}_2(n)=\widetilde{\z}_{2,\text{I}}(n)
+\widetilde{\z}_{2,\text{II}}(n)$:
the former one $\widetilde{\z}_{2,\text{I}}(n)$ conveys the symbols
$\xsuuc(n)\eqdef [x_{\SU}^{(0)}(n), 
x_{\SU}^{(1)}(n), \ldots, x_{\SU}^{(N-1)}(n)]^\trasp \in \setC^N$
to be transmitted over the $Q$ used subcarriers of  the PU,
whereas the latter one
$\widetilde{\z}_{2,\text{II}}(n)$ is a linear transformation of the symbols
$\xsuvc(n)\eqdef [x_{\SU}^{(N)}(n), 
x_{\SU}^{(N+1)}(n), \ldots, x_{\SU}^{(N+\Mvc-1)}(n)]^\trasp \in \setC^\Mvc$ to be sent over the $\Mvc$ VCs of the PU. 
We note that $\x_{\SU}(n) =[\xsuuc^\trasp(n), \xsuvc^\trasp(n)]^\trasp$.

The first summand 
$\widetilde{\z}_{2,\text{I}}(n)
\eqdef [\widetilde{z}_{2,\text{I}}^{(0)}(n), \widetilde{z}_{2,\text{I}}^{(1)}(n), 
\ldots, \widetilde{z}_{2,\text{I}}^{(P-1)}(n)]^\trasp \in \setC^P$ is obtained by 
performing a linear
transformation of the received block $\widetilde{\y}_2(n)$ through the 
Toeplitz lower-triangular matrix 
\be
\widetilde{\F}(n)  \eqdef \sum_{p=0}^{\Lpre} \widetilde{f}^{(p)}(n) \, \Sf^{p} 
\in \setC^{P \times P} \:,
\quad
\text{with $\Lpre < M$}
\label{eq:Ftilde}
\ee
that is, $\widetilde{\z}_{2,\text{I}}(n) = \widetilde{\F}(n) \, \widetilde{\y}_2(n)$,
where $\{\widetilde{f}^{(p)}(n)\}_{p=0}^{\Lpre}$ piggybacks the 
symbols in $\xsuuc(n)$. 
We note that $\widetilde{\z}_{2,\text{I}}(n)$ depends on the received signal $\widetilde{\y}_2(n)$ and, thus, it must be computed in {\em real-time}.
In this regard, it is noteworthy that, for each $n \in \setZ$, the block $\widetilde{\z}_{2,\text{I}}(n)$ 
can be interpreted as the output of a discrete-time causal LTI filter having
$\widetilde{f}^{(p)}(n)$ and $\widetilde{y}_{2}^{(p)}(n)$ as impulse response and 
input signal, respectively, i.e., 
\be
\widetilde{z}_{2,\text{I}}^{(p)}(n) = \sum_{\ell=0}^p \widetilde{f}^{(p-\ell)}(n) \, \widetilde{y}_{2}^{(\ell)}(n) \: ,
\quad \text{for $p \in \mathcal{P}$} 
\label{eq:direct-conv}
\ee
supposing that $\widetilde{f}^{(\ell)}(n)=0$ for $\ell <0$.
The functional dependence of the matrix $\widetilde{\F}(n)$ on 
the symbol subvector $\xsuuc(n)$
is much easier to explain in the frequency-domain. For each $n \in \setZ$, let 
$\f(n) \eqdef [F^{(n)}(0), F^{(n)}(1), \ldots, F^{(n)}(M-1)]^\trasp \in \setC^M$,
with 
\be
F^{(n)}(m) \eqdef 
\sum_{\ell=0}^{\Lpre} \widetilde{f}^{(\ell)}(n) 
\, e^{-j \frac{2 \pi}{M} \ell m} \: ,
\label{eq:single-DFT-pre}
\quad
\text{for $m \in \mathcal{M}$}
\ee
being the $M$-point DFT of 
$\J \, \widetilde{\f}(n)$,
where $\J \eqdef [\I_{\Lpre+1}, \mathbf{O}_{(M-\Lpre-1) \times (\Lpre+1)}^\trasp]^\trasp
\in \setR^{M \times (\Lpre+1)}$ is a zero-padding matrix and 
$\widetilde{\f}(n) \eqdef [\widetilde{f}^{(0)}(n), \widetilde{f}^{(1)}(n), \ldots, \widetilde{f}^{(\Lpre)}(n)]^\trasp \in \setC^{\Lpre+1}$ completely describes the matrix $\widetilde{\F}(n)$  given by \eqref{eq:Ftilde}.
In our scheme, we impose that $F^{(m)}(n) = \bm{\alpha}^\herm_m 
\, \xsuuc(n)$ is a linear combination of the SU symbols, with
$\bm{\alpha}_m \eqdef [\alpha_{m,0}, \alpha_{m,1}, \ldots, \alpha_{m,N-1}]^\trasp \in \setC^N$.
In this case, it results that
$\f(n) = \Acal \, \xsuuc(n)$, 
where 
$\Acal \eqdef [\bm{\alpha}_0, \bm{\alpha}_1, \ldots, \bm{\alpha}_{M-1}]^\herm \in \setC^{M \times N}$ is a {\em frequency-domain precoding matrix} of the SU symbols. In order to ensure that $\f(n)=\mathbf{0}_M$ iff $\xsuuc(n)=\mathbf{0}_N$, the matrix $\Acal$ must be full-column rank, 
i.e., $\rank(\Acal)=N$.

At this point, let us focus on the second summand 
$\widetilde{\z}_{2,\text{II}}(n) \in \setC^P$. Such a vector 
does not depend on the received data $\widetilde{\y}_2(n)$
and, hence,  it is already available at the beginning of 
the time interval 
$[n T_{\PU}, (n+1) T_{\PU})$. 
In our scheme,  it is generated as 
$\widetilde{\z}_{2,\text{II}}(n) \equiv \widetilde{\u}_{\SU}(n) 
\eqdef \Tcp \, \Widft \, \Gcal \, \xsuvc(n)$, 
where $\Gcal \eqdef [\bm{\gamma}_0, \bm{\gamma}_1, \ldots, \bm{\gamma}_{M-1}]^\herm \in \setC^{M \times \Mvc}$ is
{\em another} frequency-domain precoding matrix of the SU symbols, with $\bm{\gamma}_m \in \setC^\Mvc$, whose choice will be clear in 
Subsection~\ref{sec:SRx} and 
Section~\ref{sec:analysis-su}.\sfootnote{\label{foot:not}For each $m \in \mathcal{M}$, the 
weight vectors 
$\bm{\alpha}_m$ and $\bm{\gamma}_m$ might change
from one symbol to another. For the sake of simplicity, 
we will not explicitly indicate the dependence of $\Acal$ 
and $\Gcal$
on the PU symbol period.}
Therefore, the overall time-domain data block transmitted by the STx is given by
\be
\widetilde{\z}_{2}(n) = \widetilde{\F}(n) \, \widetilde{\y}_2(n) + \widetilde{\u}_{\SU}(n) 
\label{eq:pre-total}
\ee
whose entries are subject to D/A plus RF conversion for transmission over the wireless channel.
The main signal processing operations carried out by the STx
are depicted in Fig.~\ref{fig:figure_2}. 
Strictly speaking, the STx acts as a full-duplex AF relay, 
which linearly processes the received data $\widetilde{\y}_2(n)$ through
an own information-bearing matrix $\widetilde{\F}(n)$, 
by also adding the term $\widetilde{\u}_{\SU}(n)$. 

\begin{figure}[tbp]
\centering
\psfrag{inp\r}{}
\psfrag{y2tilde\r}{$\widetilde{\y}_{2}(n)$}
\psfrag{z2tilde\r}{$\widetilde{\z}_{2}(n)$}
\psfrag{z2tildeI\r}{$\widetilde{\z}_{2,\text{I}}(n)$}
\psfrag{z2tildeII\r}{$\widetilde{\u}_{\SU}(n) $}
\psfrag{Fmat\r}{\!$\widetilde{\F}(n)$}
\psfrag{Gmat\r}{$\,\,\,\Gcal$}
\psfrag{xsuI\r}{$\xsuuc(n)$}
\psfrag{xsuII\r}{$\xsuvc(n)$}
\includegraphics[width=0.80\linewidth, trim = 0 0 0 0]{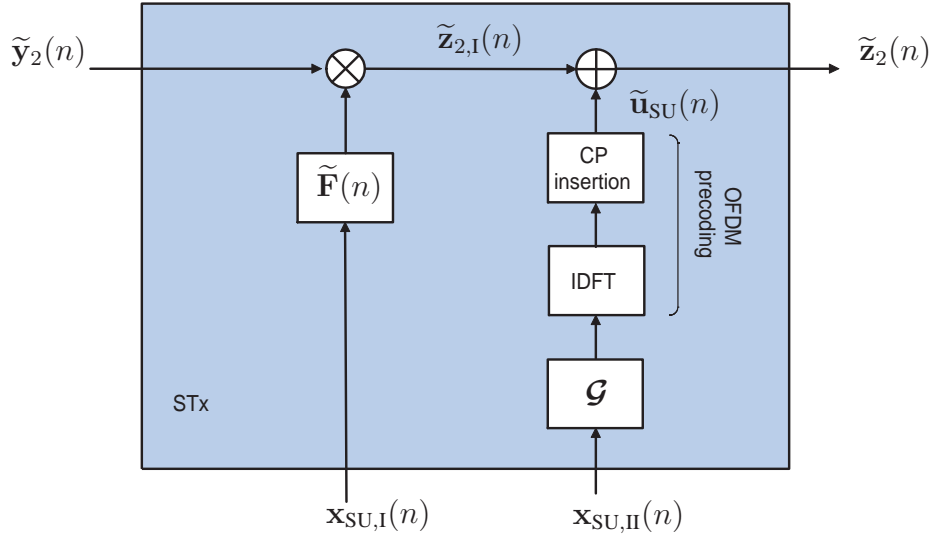}
\caption{Baseband processing carried out at the secondary user transceiver.}
\label{fig:figure_2}
\end{figure}

Implementation of the convolution formula \eqref{eq:direct-conv}
requires the synthesis of the time-domain vector $\widetilde{\f}(n)$.
The relationship between $\widetilde{\f}(n)$ and its frequency-domain counterpart $\f(n)$ is given by 
\be
\sqrt{M} \, \Wdft \, \J \, \widetilde{\f}(n) =\f(n)
\label{eq:syst}
\ee
with 
$\Wdft \eqdef \Widft^{-1}=\Widft^\herm$
defining the unitary symmetric DFT matrix \cite{Wang}.
\begin{lemma}
\label{lem:consist}
The system of linear equations \eqref{eq:syst} is consistent 
(i.e., it admits at least one solution) iff
$\f(n)= \bm{\Pi}_{\text{IDFT}} \, \r(n)$, where 
the columns of $\bm{\Pi}_{\text{IDFT}} \in \setC^{M \times (\Lpre+1)}$ 
form a basis for the null space of $\Widftdown$,
i.e., $\Widftdown \, \bm{\Pi}_{\text{IDFT}}=\mathbf{O}_{(M-\Lpre-1) \times (\Lpre+1)}$,\sfootnote{It can be assumed, without loss of generality, that $\bm{\Pi}_{\text{IDFT}}$ 
is semi-unitary i.e., $\bm{\Pi}_{\text{IDFT}}^\herm \, \bm{\Pi}_{\text{IDFT}}=\I_{\Lpre+1}$.}
with $\Widftdown \in \setC^{(M-\Lpre-1) \times M}$
obtained from $\Widft$ by picking its last $M-\Lpre-1$ rows, 
and $\r(n) \in \setC^{\Lpre+1}$ is an arbitrary vector.
\end{lemma}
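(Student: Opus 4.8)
The plan is to reduce the consistency question to a statement about the range of the matrix $\sqrt{M}\,\Wdft\,\J$, and then to identify that range with the null space of $\Widftdown$. First I would observe that, since $\Widft$ is unitary, so is $\Wdft=\Widft^\herm$; and since $\J=[\I_{\Lpre+1}^\trasp,\ \mathbf{O}_{(M-\Lpre-1)\times(\Lpre+1)}^\trasp]^\trasp$ merely appends $M-\Lpre-1$ zero rows below $\I_{\Lpre+1}$, the product $\Wdft\,\J$ is exactly the submatrix formed by the first $\Lpre+1$ columns of $\Wdft$. These columns are orthonormal, hence linearly independent, so $\Wdft\,\J$ has full column rank $\Lpre+1$. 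Consequently \eqref{eq:syst} is consistent if and only if $\f(n)$ lies in the column space $\operatorname{range}(\Wdft\,\J)$, which is an $(\Lpre+1)$-dimensional subspace of $\setC^M$.

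Next I would connect that column space to $\Widftdown$. Partition $\Widft$ by rows as $\Widft=[\,\Widft_{\text{up}}^\trasp,\ \Widftdown^\trasp\,]^\trasp$, where $\Widft_{\text{up}}\in\setC^{(\Lpre+1)\times M}$ collects the first $\Lpre+1$ rows and $\Widftdown$ the remaining ones (precisely the matrix in the statement). Taking Hermitian transposes gives $\Wdft=\Widft^\herm=[\,\Widft_{\text{up}}^\herm,\ \Widftdown^\herm\,]$, so that $\Wdft\,\J=\Widft_{\text{up}}^\herm$. Now invoke unitarity in the form $\Widft\,\Widft^\herm=\I_M$: reading off its lower-left block yields $\Widftdown\,\Widft_{\text{up}}^\herm=\mathbf{O}_{(M-\Lpre-1)\times(\Lpre+1)}$, i.e. $\Widftdown\,(\Wdft\,\J)=\mathbf{O}$. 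Hence every column of $\Wdft\,\J$ lies in $\operatorname{null}(\Widftdown)$, and therefore $\operatorname{range}(\Wdft\,\J)\subseteq\operatorname{null}(\Widftdown)$.

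To upgrade this inclusion to an equality I would do a dimension count. The matrix $\Widftdown$ consists of $M-\Lpre-1$ rows of a unitary matrix, hence has full row rank $M-\Lpre-1$, so $\dim\operatorname{null}(\Widftdown)=M-(M-\Lpre-1)=\Lpre+1$. Since $\operatorname{range}(\Wdft\,\J)$ also has dimension $\Lpre+1$ and is contained in $\operatorname{null}(\Widftdown)$, the two subspaces coincide. By construction the columns of $\bm{\Pi}_{\text{IDFT}}$ form a basis of $\operatorname{null}(\Widftdown)$, so $\operatorname{range}(\bm{\Pi}_{\text{IDFT}})=\operatorname{null}(\Widftdown)=\operatorname{range}(\Wdft\,\J)$. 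Combining this with the first step, \eqref{eq:syst} is consistent iff $\f(n)\in\operatorname{range}(\bm{\Pi}_{\text{IDFT}})$, i.e. iff $\f(n)=\bm{\Pi}_{\text{IDFT}}\,\r(n)$ for some $\r(n)\in\setC^{\Lpre+1}$, which is the claim; the semi-unitary normalization in the footnote is automatic, since any basis of $\operatorname{null}(\Widftdown)$ can be orthonormalized.

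There is no deep step here: the whole argument is elementary linear algebra applied to a unitary matrix. The one place that needs care is the bookkeeping that links ``the first $\Lpre+1$ columns of $\Wdft$'' (which is what $\Wdft\,\J$ is) to ``the last $M-\Lpre-1$ rows of $\Widft$'' (which is what defines $\Widftdown$) — keeping the row/column partitions of $\Widft$ and $\Widft^\herm$ consistent and then extracting the correct off-diagonal block of $\Widft\,\Widft^\herm=\I_M$ is where an index slip could creep in. Everything else — the rank of $\Wdft\,\J$, the rank of $\Widftdown$, and the dimension count — is routine.
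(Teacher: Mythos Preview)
Your argument is correct. The route differs from the paper's: the paper invokes the generalized-inverse consistency criterion $(\sqrt{M}\,\Wdft\,\J)(\sqrt{M}\,\Wdft\,\J)^{-}\f(n)=\f(n)$, computes $(\sqrt{M}\,\Wdft\,\J)^{-}=\J^\trasp\,\Widft/\sqrt{M}$, and simplifies to $(\I_M-\J\J^\trasp)\,\Widft\,\f(n)=\ob_M$, which by the block structure of $\J$ is exactly $\Widftdown\,\f(n)=\ob_{M-\Lpre-1}$. You instead identify $\operatorname{range}(\Wdft\,\J)$ with $\operatorname{null}(\Widftdown)$ directly, via the off-diagonal block of $\Widft\,\Widft^\herm=\I_M$ together with a dimension count. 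Your approach is slightly more elementary --- it avoids any appeal to generalized inverses and uses only unitarity and rank --- while the paper's approach makes the connection to the projector $\J\J^\trasp$ explicit, which dovetails naturally with the minimal-norm solution \eqref{eq:minnormsol} derived immediately afterwards. Both arrive at the same characterization of the consistency set, and neither has a real advantage in generality here.
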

\begin{proof}
System \eqref{eq:syst} is consistent \cite{Penrose}
iff
$(\sqrt{M} \, \Wdft \, \J) \, (\sqrt{M} \, \Wdft \, \J)^{-} \, \f(n) = \f(n)$. Since $(\sqrt{M} \, \Wdft \, \J)^{-}= \J^{-} \, \Widft/\sqrt{M}$,
with $\J^{-}=\J^\trasp$, the previous equation can be equivalently written 
after straightforward algebraic manipulations 
as $(\I_M-\J \, \J^\trasp) \, \Widft \, \f(n)=\mathbf{0}_M$ which, accounting for the structure of $\J$ and partitioning $\Widft$ accordingly, leads to the homogeneous system of linear equations
$\Widftdown \, \f(n)=\mathbf{0}_{M-\Lpre-1}$. Hence, system 
\eqref{eq:syst} is consistent iff $\f(n)$ belongs to the null space of
$\Widftdown$.  The proof ends by observing that $\Widftdown$
is full-row rank, i.e., $\rank(\Widftdown)=M-\Lpre-1$, and, thus, 
the dimension of its null space (i.e., its nullity) is equal to $M-\rank(\Widftdown)=\Lpre+1$.
\end{proof}

As an immediate consequence of Lemma~\ref{lem:consist}, one has that the
precoding matrix $\Acal$ cannot be completely arbitrary but, instead, it must obey 
$\Acal=\bm{\Pi}_{\text{IDFT}} \, \Acalover$, with $\Acalover \in \setC^{(\Lpre+1) \times N}$.
It is important to note that, in this case, the rank condition $\rank(\Acal)=N$
mandates $\rank(\bm{\Pi}_{\text{IDFT}} \, \Acalover)=N$, which happens iff 
$\rank(\Acalover)=N \le \Lpre+1$,\sfootnote{It results \cite{Horn} that $\rank(\Acalover) \le \rank(\bm{\Pi}_{\text{IDFT}} \, \Acalover) \le \min[\Lpre+1, \rank(\Acalover)]$.} i.e., the length of the vector $\xsuuc(n)$
cannot be greater than the filter length $\Lpre+1$.
In this case, the {\em minimal-norm} solution of 
\eqref{eq:syst} reads as
\be
\widetilde{\f}(n) = (\sqrt{M} \, \Wdft \, \J )^\dag \, \f(n) =
\frac{1}{\sqrt{M}} \, \J^\trasp \, \Widft \, \bm{\Pi}_{\text{IDFT}} \, \Acalover \, \xsuuc(n) \: .
\label{eq:minnormsol}
\ee
The choice of $\Acalover$  may depend on
the available CSI at the STx
and will be discussed 
in Subsection~\ref{sec:SRx} and Section~\ref{sec:analysis-su}.

As a final remark, the (linear) convolution \eqref{eq:direct-conv} can be directly calculated in real-time without inherent 
latency: indeed, if computations were instantaneous, each sample 
of the received signal would yield a corresponding output
to be transmitted by the STx. The actual latency of the direct convolution 
\eqref{eq:direct-conv} results from the time necessary
to compute each output sample. We will see in Subsections~\ref{sec:PRx}
and \ref{sec:SRx} that, due to other constraints, 
the filter order $\Lpre$ has to be smaller than $\Lcppu$. The computation time is shorter than the sampling period $T_c$ and, thus, 
the latency can be assumed to be equal to one sample only.
Even though the computational cost of \eqref{eq:direct-conv} increases 
linearly with the filter order,  
with respect to frequency-domain block convolution
techniques based on the fast Fourier transform (FFT),\sfootnote{Computationally efficient FFT-based methods cannot be used 
for evaluating the linear convolution \eqref{eq:direct-conv}
since they have an inherent input-to-output
latency equal to the length $P$ of the block, i.e., one symbol
period $T_\PU$: indeed, the input block $\widetilde{\y}_2(n)$ 
must be fully available in order to start computing the 
the samples of the output block $\widetilde{\z}_{2,\text{I}}(n) $.} 
the price to pay in terms of computational complexity is negligible
when $\Lpre < \Lcppu \ll M$.

\subsection{Signal received and processed by the PRx}
\label{sec:PRx}

Let $\widetilde{\y}_3(n) \in \setC^{P}$ gather 
the baseband-equivalent samples received 
by the PRx within the $n$th PU symbol period.
Accounting for \eqref{eq:vetr2} and 
\eqref{eq:pre-total},  one gets
\begin{multline}
\widetilde{\y}_3(n) = \widetilde{\H}_{13}^{(0)} \, \widetilde{\u}_{\PU}(n)
+ \widetilde{\H}_{13}^{(1)} \, \widetilde{\u}_{\PU}(n-1) + 
\widetilde{\H}_{23}^{(0)} \, \widetilde{\z}_2(n)
+ \widetilde{\H}_{23}^{(1)} \, \widetilde{\z}_2(n-1) + \widetilde{\v}_{3}(n)
\\ =  \left[\widetilde{\H}_{13}^{(0)} + \widetilde{\H}_{23}^{(0)} \, \widetilde{\F}(n)  \, 
\widetilde{\H}_{12}^{(0)} \right] \widetilde{\u}_{\PU}(n)
+ \left[ \widetilde{\H}_{13}^{(1)} + 
\widetilde{\H}_{23}^{(0)} \, \widetilde{\F}(n) \, \widetilde{\H}_{12}^{(1)} +
\widetilde{\H}_{23}^{(1)} \, \widetilde{\F}(n-1) \, \widetilde{\H}_{12}^{(0)}
\right] \widetilde{\u}_{\PU}(n-1) 
\\ + \widetilde{\H}_{23}^{(1)} \, \widetilde{\F}(n-1) \, \widetilde{\H}_{12}^{(1)}
\, \widetilde{\u}_{\PU}(n-2) +
\widetilde{\H}_{23}^{(0)} \, \widetilde{\u}_{\SU}(n)
+ \widetilde{\H}_{23}^{(1)} \, \widetilde{\u}_{\SU}(n-1) 
\\ +
\widetilde{\H}_{23}^{(0)} \, \widetilde{\F}(n) \, \widetilde{\v}_{2}(n) +
\widetilde{\H}_{23}^{(1)} \, \widetilde{\F}(n-1) \, \widetilde{\v}_{2}(n-1) 
+ \widetilde{\v}_{3}(n) 
\label{eq:vetr3}
\end{multline}
where $\{\widetilde{\H}_{13}^{(0)}, 
\widetilde{\H}_{13}^{(1)} \}$ and  
$\{\widetilde{\H}_{23}^{(0)}, 
\widetilde{\H}_{23}^{(1)}\}$ can be obtained from
\eqref{eq:C120} and \eqref{eq:C121} 
by replacing $\{L_{12}, \widetilde{h}_{12}(\ell), \theta_{12}\}$
with $\{L_{13}, \widetilde{h}_{13}(\ell), \theta_{13}\}$ 
and $\{L_{23}, \widetilde{h}_{23}(\ell), 
\theta_{23}\}$, respectively, and
we remember that $\widetilde{\v}_{3}(n)$ accounts for noise.

The product of any lower (upper) triangular Toeplitz matrices is a lower (upper) triangular Toeplitz matrix, too \cite{Horn}. 
Indeed, it is directly verified that, if the following inequality
\be
L_{12}+\Lpre+L_{23}+ \theta_{12}+\theta_{23} \le P-1
\label{eq:dis}
\ee
holds,  
the product $\widetilde{\H}_{23}^{(0)} \, \widetilde{\F}(n)  \, 
\widetilde{\H}_{12}^{(0)}$ is a  
lower-triangular Toeplitz matrix 
having as first column
$[\bm{0}_{\theta_{12}+\theta_{23}}^\trasp, \widetilde{\h}_{123}^\trasp(n), 
\bm{0}^\trasp_{P-L_{12}-\Lpre-L_{23}-\theta_{12}-\theta_{23}-1}]^\trasp$,
where the vector $\widetilde{\h}_{123}\in \setC^{L_{12}+\Lpre+L_{23}+1}$ collects the samples 
of the (linear) convolution
among $\{\widetilde{h}_{12}(\ell)\}_{\ell=0}^{L_{12}}$, 
$\{\widetilde{f}^{(\ell)}(n)\}_{\ell=0}^{\Lpre}$, 
and $\{\widetilde{h}_{23}(\ell)\}_{\ell=0}^{L_{23}}$.
Moreover, one has
$\widetilde{\H}_{23}^{(1)} \, \widetilde{\F}(n-1) \, \widetilde{\H}_{12}^{(1)} =\mathbf{O}_{P \times P}$, provided that 
\eqref{eq:dis} is fulfilled.
On the other hand, it is verified by direct inspection that: (i) only
the first $L_{12}+\Lpre +L_{23}+ \theta_{12}+\theta_{23}$ rows
of $\widetilde{\H}_{23}^{(0)} \, \widetilde{\F}(n) \, \widetilde{\H}_{12}^{(1)}$ might
not be zero; (ii) the last $P-L_{i3}-\theta_{i3}$ rows of the matrix
$\widetilde{\H}_{i3}^{(1)}$ are identically zero, for $i \in \{1,2\}$;
(iii) the nonzero entries of  $\widetilde{\H}_{23}^{(1)} \, \widetilde{\F}(n-1) \, \widetilde{\H}_{12}^{(0)}$ and 
$\widetilde{\H}_{23}^{(1)} \, \widetilde{\F}(n-1) \, \widetilde{\v}_{2}(n-1)$
are located within their first $L_{23}+\theta_{23}$ rows.
Therefore, if the CP is designed such that 
\be
\Lcppu \ge \max\left(L_{12}+\Lpre+L_{23} + \theta_{12}+\theta_{23}, 
L_{13}+\theta_{13}\right)
\label{eq:CP}
\ee
the IBI contribution in \eqref{eq:vetr3} can be completely discarded by dropping the first $\Lcppu$ components of $\widetilde{\y}_3(n)$. 
In other words, the convolutive process carried out by the STx
may increase the frequency selectivity of the end-to-end PU channel. 
This drawback can be overcome by increasing the CP length as in \eqref{eq:CP}, 
which leads to an  inherent reduction of the transmission data rate 
of the PU system when 
$L_{12}+\Lpre+L_{23} + \theta_{12}+\theta_{23} > 
L_{13}+\theta_{13}$. However, such a (possible) loss 
turns out to be negligible if the number $M$
of subcarriers is significantly greater than $\Lcppu$.
Most important, we show in Section~\ref{sec:analysis-pu}
that,  if the legacy system is designed to
fulfil \eqref{eq:CP}, it might even achieve 
a significant performance gain.
Moreover, assumption
\eqref{eq:CP} requires only upper bounds (rather than the
exact knowledge) on the channel orders and TOs. In general, depending on the transmitted
signal parameters (carrier frequency and bandwidth) and environment 
(indoor or outdoor), the maximum channel multipath
spread is known and the TOs  are confined to a small uncertainty interval,
whose support can be typically predicted. 

CP removal is accomplished
by defining the matrix $\R_{\text{cp}}\eqdef [\mathbf{O}_{M \times \Lcppu}, \I_{M}] \in \setR^{M \times P}$ and
forming the product $\R_{\text{cp}} \, \widetilde{\y}_3(n)$.
If \eqref{eq:minnormsol} and \eqref{eq:CP} hold,  after discarding the CP and performing $M$-point DFT, the received signal over all the subcarriers
at the PRx  can be expressed as follows
\be
\y_{\PU}(n) \equiv \y_3(n) \eqdef \Wdft \, \R_{\text{cp}} \, \widetilde{\y}_3(n)
= \Hcal_{\PU}(n) \, \bm{\Theta} \, \x_{\PU}(n) + \v_{\PU}(n)
\label{y_PU}
\ee
where $\Hcal_{\PU}(n) \eqdef \Hcal_{13} + \Hcal_{23} \, \Fxsu(n) \, \Hcal_{12} \in \setC^{M \times M}$ is a diagonal matrix whose $m$th diagonal entry is given by
\be
H_{\PU}(m) \eqdef H_{13}(m)+H_{12}(m) \, H_{23}(m) \, F^{(n)}(m) \: ,
\quad \text{with $m \in \mathcal{M}$}
\label{eq:HPUentry}
\ee
with $F^{(n)}(m)=\{\f(n)\}_m=\{\bm{\Pi}_{\text{IDFT}} \, \Acalover \, \xsuuc(n)\}_m$,
and
\be
\v_{\PU}(n) \eqdef 
\Wdft \, \R_{\text{cp}} \, \widetilde{\H}_{23}^{(0)} \, \widetilde{\F}(n) \, 
\widetilde{\v}_{2}(n) +\Hcal_{23} \, \Gcal \, \xsuvc(n) + \v_{3}(n) \in \setC^M
\label{eq:eq-noise-3}
\ee
represents the {\em equivalent noise} vector at the PRx, the matrices $\Hcal_{13}$ and $\Hcal_{23}$ have been defined in
\eqref{eq:H-def}, whereas
$\Fxsu(n) \eqdef \diag[\f(n)] =\diag[\bm{\Pi}_{\text{IDFT}} \, \Acalover \, 
\xsuuc(n)]
\in \setC^{M \times M}$
collects the $M$-point DFT samples 
\eqref{eq:single-DFT-pre}, 
and $\v_j(n) \eqdef [{v}_j^{(0)}(n), {v}_j^{(1)}(n), \ldots, {v}_j^{(M-1)}(n)]^\trasp = \Wdft \, \R_{\text{cp}} \, \widetilde{\v}_j(n)
\in \setC^{M}$, for $j \in \{2,3\}$.
It is apparent
that the overall PU
channel matrix $\Hcal_{\PU}(n)$
incorporates the contribution of the SU 
symbol block $\x_{\SU}(n)$.
The entries of
$\Hcal_{\PU}(n)$ can be
estimated at the PRx using training symbols
transmitted by the PTx and,
thus, knowledge of $\Fxsu(n)$ is not required at
the PRx.\sfootnote{\label{note-channelestimate}The
channel estimation error
can be made negligible, for intermediate-to-high SNRs,
by using a number of training symbols that is not smaller
than $\Lcppu$ and by
carefully designing the PU training sequence \cite{Mengali}.}

\subsection{Signal received and processed by the SRx}
\label{sec:SRx}

Similarly to \eqref{eq:vetr3},
the baseband-equivalent received data vector 
by the SRx within the $n$th PU symbol period
can be expressed as
\be
\widetilde{\y}_4(n) = \widetilde{\H}_{14}^{(0)} \, \widetilde{\u}_{\PU}(n)
+ \widetilde{\H}_{14}^{(1)} \, \widetilde{\u}_{\PU}(n-1) + 
\widetilde{\H}_{24}^{(0)} \, \widetilde{\z}_2(n)
+ \widetilde{\H}_{24}^{(1)} \, \widetilde{\z}_2(n-1) + \widetilde{\v}_{4}(n)
\label{eq:vetr4}
\ee
where $\{\widetilde{\H}_{14}^{(0)}, 
\widetilde{\H}_{14}^{(1)} \}$ and  
$\{\widetilde{\H}_{24}^{(0)}, 
\widetilde{\H}_{24}^{(1)}\}$ can be obtained from
\eqref{eq:C120} and \eqref{eq:C121} 
by replacing $\{L_{12}, \widetilde{h}_{12}(\ell), \theta_{12}\}$
with $\{L_{14}, \widetilde{h}_{14}(\ell), \theta_{14}\}$ 
and $\{L_{24}, \widetilde{h}_{24}(\ell), 
\theta_{24}\}$, respectively, and 
we remember that 
$\widetilde{\v}_{4}(n)$ is the noise vector.
Paralleling the same arguments of Subsection~\ref{sec:PRx}, it can 
be shown that, if $L_{12}+\Lpre+L_{24}+ \theta_{12}+\theta_{24} \le P-1$
and\sfootnote{To suppress only its own IBI represented by $\widetilde{\u}_{\SU}(n-1)$,  the SRx could even discard a portion of the received data 
smaller than $\Lcppu$, thus accepting the IBI of the PU transmission
due to $\widetilde{\u}_{\PU}(n-1)$. However, in this case, more complex receiving structures would be required to reliably estimate the desired symbol block $\x_{\SU}(n)$ \cite{Darsena2007a,Darsena2015}.
} 
\be
\Lcppu \ge \max\left(L_{12}+\Lpre+L_{24} + \theta_{12}+\theta_{24}, 
L_{14}+\theta_{14}\right)
\label{eq:CP-2}
\ee
after discarding the CP and performing $M$-point DFT,
the frequency-domain signal received at the SRx can be written as
\be
\y_{\SU}(n) \equiv \y_4(n) \eqdef \Wdft \, \R_{\text{cp}} \, \widetilde{\y}_4(n)=
\Hcal_{\SU}(n) \, \bm{\Delta} \, \x_{\SU}(n) + \v_{\SU}(n)
\label{y_SU}
\ee
where
${\Hcal}_{\SU}(n)  \eqdef [ \H_{\SU}(n), \Hcal_{24}] \in \setC^{M \times 2 M}$, 
$\H_{\SU}(n)  \eqdef
\Hcal_{24} \, [\Hcal_{12} \, \X_{\PU}(n) +\V_2(n)] \in \setC^{M \times M}$,
$\bm{\Delta}  \eqdef \diag(\bm{\Pi}_{\text{IDFT}} \, \Acalover, \Gcal) \in \setC^{2 M \times (N+\Mvc)}$ represents  the {\em overall} 
frequency-domain precoding matrix of the SU,
and $\v_{\SU}(n)  \eqdef \Hcal_{14} \, \bm{\Theta} \, \x_{\PU}(n)+ \v_{4}(n) \in \setC^M$ denotes the  {\em equivalent noise} term at the SRx.
Additionally,  $\X_{\PU}(n) \eqdef \diag[\bm{\Theta} \, \x_{\PU}(n)] \in \setC^{M \times M}$,
$\V_{2}(n) \eqdef \diag[\v_2(n)] \in \setC^{M \times M}$, 
$\v_4(n) \eqdef \Wdft \, \R_{\text{cp}} \, \widetilde{\v}_4(n) \in \setC^{M}$,
and the diagonal channel matrices $\Hcal_{14}$ and $\Hcal_{24}$ have been defined in \eqref{eq:H-def}.
In writing \eqref{y_SU}, we have replaced the noise vector 
$\widetilde{\v}_{2}(n)$ with $\hat{\v}_{2}(n) \eqdef \Tcp \, \v_2(n)$: they are 
both ZMCSCG random vectors with correlation matrix 
$\expect[\widetilde{\v}_{2}(n) \,  \widetilde{\v}_{2}^\herm(n)] = \sigma^2_{v_2} \, \I_{P}$ and $\expect[\hat{\v}_{2}(n) \,  \hat{\v}_{2}(n)^\herm] = \sigma^2_{v_2} \, \Tcp \, \Tcp^\trasp$, respectively. 
For sufficiently large values of $M$, the matrices $\I_{P}$
and $\Tcp \, \Tcp^\trasp$ are asymptotically equivalent in weak norm \cite{Gray_book}. Therefore, in the large $M$ limit, 
the random vectors $\widetilde{\v}_{2}(n)$ and $\hat{\v}_{2}(n)$
have the same distribution.
Moreover, it is noteworthy that
the channel matrix $\H_{\SU}(n)$  is a diagonal matrix, whose $m$th diagonal entry is given by
\be
H_{\SU}^{(n)}(m) = H_{24}(m)  \, [H_{12}(m) \, x_{\PU}^{(m)}(n) \, \beta_m + v_2^{(m)}(n)]
\label{eq:diag-hsu}
\ee
where $\beta_m=0$ if $m \in \IPUvc$, whereas 
$\beta_m=1$ if $m \in \IPUuc \eqdef \{0,1,\ldots,M-1\}-\IPUvc$,
which represents the set of PU used subcarriers. 
The ``composite'' matrix
${\Hcal}_{\SU}(n)$ in \eqref{y_SU} can be reliably
estimated at the SRx using training symbols
transmitted by the STx (footnote~\ref{note-channelestimate}
also applies in this case with obvious modifications).
Signal models \eqref{y_PU} and \eqref{y_SU} hold 
if the CP of the PU system is designed
to satisfy both inequalities \eqref{eq:CP} and \eqref{eq:CP-2}.
Such an assumption
is quite reasonable when the STx is very close to the PTx,
which is the network scenario where our proposed scheme
ensures a significant performance gain for the PU system
(see Section~\ref{sec:analysis-pu}).

Some preliminary comments are now in order regarding the
choice of the precoding matrices $\Acal= \bm{\Pi}_{\text{IDFT}} \, \Acalover$ and $\Gcal$.
Since $H_{\SU}^{(n)}(m) = H_{24}(m)  \, v_2^{(m)}(n)$ for $m \in \IPUvc$,
over the PU VCs, the time-domain convolution \eqref{eq:direct-conv}
leads to a multiplicative superposition of
the SU symbols on the noise samples $\{v_2^{(m)}(n)\}_{m \in \IPUvc}$: 
it is intuitive that, from the SU viewpoint,
such a strategy is detrimental for vanishingly
small noise variances. 
Therefore, any reasonable optimization criterion of the SU precoder 
will impose that
$\bm{\alpha}_{q_0} = \bm{\alpha}_{q_1} = \cdots =
\bm{\alpha}_{q_{\Mvc-1}}=\mathbf{0}_N$.
Since 
the $m$th row of $\Acal$ is given by $\bm{\alpha}_{m}^\herm= \, [\bm{\pi}_{\text{IDFT}}^{(m)}]^\herm \, \Acalover$,  where the conjugate transpose of 
$\bm{\pi}_{\text{IDFT}}^{(m)} \in \setC^{\Lpre+1}$ is the
$m$th row of $\bm{\Pi}_{\text{IDFT}}$,
for $m \in \mathcal{M}$, 
such a condition is tantamount to 
the matrix equation $\Acalover^\herm \, {\bm{\Pi}}_{\text{vc}}=\mathbf{O}_{N \times \Mvc}$,
with ${\bm{\Pi}}_{\text{vc}} \eqdef [\bm{\pi}^{(q_0)}_{\text{IDFT}}, \bm{\pi}^{(q_1)}_{\text{IDFT}}, 
\ldots, \bm{\pi}^{(q_{\Mvc-1})}_{\text{IDFT}}] \in \setC^{(\Lpre+1) \times \Mvc}$, whose general solution \cite{Penrose} can be written as 
\be
\Acalover = \bm{\Upsilon}_{\text{vc}} \, \Ccal
\quad
\Rightarrow
\quad
\Acal = \bm{\Pi}_{\text{IDFT}} \, \bm{\Upsilon}_{\text{vc}} \, \Ccal
\label{eq:C}
\ee
where the columns of $\bm{\Upsilon}_{\text{vc}} \in \setC^{(\Lpre+1) \times (\Lpre-R_{\text{vc}}+1)}$ form a basis 
for the null space of ${\bm{\Pi}}_{\text{vc}}^\herm$,
i.e., ${\bm{\Pi}}_{\text{vc}}^\herm \, \bm{\Upsilon}_{\text{vc}}=\mathbf{O}_{\Mvc \times (\Lpre-R_{\text{vc}}+1)}$,\sfootnote{It can be assumed, without loss of generality, that $\bm{\Upsilon}_{\text{vc}}$ 
is semi-unitary i.e., $\bm{\Upsilon}_{\text{vc}}^\herm \, 
\bm{\Upsilon}_{\text{vc}}=\I_{\Lpre-R_{\text{vc}}+1}$.}
$\Ccal \in \setC^{(\Lpre-R_{\text{vc}}+1) \times N}$ is an arbitrary matrix
to be designed, and $R_{\text{vc}} \eqdef \rank(\bm{\Pi}_{\text{vc}}) = \min(\Lpre+1, \Mvc)$.
Remembering $\rank(\Acalover)=N \le \Lpre+1$, it follows from 
\eqref{eq:C} that $\rank(\bm{\Upsilon}_{\text{vc}} \, \Ccal)=N$, which happens iff $\rank(\Ccal)=N \le \Lpre-R_{\text{vc}}+1$.\sfootnote{It results \cite{Horn} that $\rank(\Ccal) \le \rank(\bm{\Upsilon}_{\text{vc}} \, \Ccal) \le 
\min[\Lpre-R_{\text{vc}}+1, \rank(\Ccal)]$.} Factorization
\eqref{eq:C} further reduces the number $N$ of symbols 
that the SU can transmit on the PU used subcarriers: in particular, 
the SU can send information over such subcarriers only if 
$\Lpre+1 > \Mvc$ and, thus, $R_{\text{vc}}=\Mvc$. 
In this setting, to allow the SU to transmit as many 
symbols as possible, we assume hereinafter that 
$N = \Lpre-\Mvc+1$, which implies that 
$\Ccal$ is square and nonsingular. 
The design of the matrix $\Ccal$ is discussed in 
Section~\ref{sec:analysis-su}. 

The PU VCs are a precious communication resource for the SU that 
cannot be wasted. For such a reason, the term $\widetilde{\u}_{\SU}(n)$ in the right-hand side (RHS) of \eqref{eq:pre-total}
has been introduced aimed at managing the SU transmission over the
PU VCs. To this goal, we impose that\sfootnote{If
the PU does not use VCs, i.e., $\Mvc=0$, then $\Gcal=\mathbf{O}_{M \times \Mvc}$ and the second summand in the RHS of \eqref{eq:pre-total} disappears.}
$\bm{\gamma}_m=\mathbf{0}_\Mvc$, $\forall m \in \IPUuc$,
which leads to the factorization 
$\Gcal = \bm{\Xi}\, \Dcal$, where the matrix
$\bm{\Xi} \in \setR^{M \times \Mvc}$ 
inserts zero rows in $\Gcal$ over the PU used subcarriers and
$\Dcal \eqdef [\bm{\gamma}_{q_0}, \bm{\gamma}_{q_1}, \ldots, \bm{\gamma}_{q_{\Mvc-1}}]^\herm \in \setC^{\Mvc \times \Mvc}$
is an arbitrary matrix, whose choice is
deferred to Section~\ref{sec:analysis-su},
which is used to 
transmit in parallel a linear combination of the entries of the symbol vector $\xsuvc(n)$ on all the PU VCs.

To limit the average transmit 
power of the STx (in units of energy per PU symbol), we 
consider the frequency-domain version
of the signal \eqref{eq:pre-total} transmitted by the STx, which 
assumes the expression
$\z_2(n) \eqdef \Wdft \, \R_{\text{cp}} \, \widetilde{\z}_2(n) =
\Fxsu(n)  \, \y_2(n) +\Gcal \, \xsuvc(n)$,
where the vector $\y_2(n) \eqdef \Wdft \, \R_{\text{cp}} \, \widetilde{\y}_2(n)=\Hcal_{12} \, \bm{\Theta} \, \x_{\PU}(n) + \v_{2}(n)$
is the frequency-domain block received by the STx [see \eqref{eq:vetr2}].
Power allocation over the different subcarriers is adjusted
at the STx according to the constraint
$\expect\left[\|\z_2(n)\|^2\right]=\PSU$, where
$\PSU >0$ is the SU power budget.
Since
$\expect[|F^{(m)}(n)|^2]=\|\bm{\alpha}_m\|^2$,
$\expect[\|\Gcal \, \xsuvc(n)\|^2]=
\sum_{m \in \IPUvc} \|\bm{\gamma}_{m}\|^2$, and
$\expect[\xsuvc(n) \, \y_2^\herm(n)]=\mathbf{O}_{\Mvc \times M}$,
such a constraint imposes that
\be
(\sigma_{12}^2 \, \PPU + \sigma^2_{v_2}) \sum_{m \in \IPUuc}
\|\bm{\alpha}_m\|^2  + \sum_{m \in \IPUvc} \|\bm{\gamma}_m\|^2 = \PSU 
\label{eq:power}
\ee
where, according to \eqref{eq:C}, one has
$\|\bm{\alpha}_m\|^2=\|\Ccal^\herm \, 
\bm{\Upsilon}_{\text{vc}}^\herm \, 
\bm{\pi}_{\text{IDFT}}^{(m)}\|^2$.

\section{Worst-case ergodic capacity of the PU}
\label{sec:analysis-pu}

Herein, we show that, under
appropriate conditions, the concurrent transmission of
the SU can maintain or even improve the performance of the PU.
With this goal in mind,  we derive the expression of a lower bound on
the mutual information of the PU system with CSI at the receiver (CSIR).
This expression is used to compute a lower bound on the ergodic channel
capacity of the PU, which generalizes and subsumes as a particular case
the results reported in \cite{Verde1,Verde2}.\sfootnote{The upper bound
on the  PU ergodic capacity reported in \cite{Verde1,Verde2}
can be generalized with similar arguments as well.}
Since the detection process at the PRx is
carried out on a frame-by-frame basis, we 
omit the dependence on the frame index $n$ hereinafter. 

With reference to the signal model \eqref{y_PU},
the computation of a general expression of the mutual information
$\MI(\x_{\PU},\y_{\PU}\,|\, \Hcal_{\PU})$ (in bits/s/Hz) between
$\x_{\PU}$ and $\y_{\PU}$, given $\Hcal_{\PU}$,
is significantly complicated by the fact that
$\v_{\PU}$ given by \eqref{eq:eq-noise-3} is not a Gaussian
random vector. However, a lower bound on $\MI(\x_{\PU},\y_{\PU}\,|\, \Hcal_{\PU})$
can be obtained by observing that the ZMCSCG distribution is the
worst-case noise distribution under a variance
constraint.\sfootnote{\label{foot:worst-noise}Given
a  variance  constraint,  the  Gaussian  noise
minimizes  the  capacity
of  a  point-to-point  additive  noise  channel \cite{Gallager_book, CoverThomas_book}, since the Gaussian distribution maximizes
the entropy subject to a variance constraint.}
First of all, to simplify matters, as already done in 
\eqref{sec:SRx}, we replace in \eqref{eq:eq-noise-3} the noise vector 
$\widetilde{\v}_{2}$ with $\hat{\v}_{2} = \Tcp \, \v_2$, 
which allows one to replace 
$\Wdft \, \R_{\text{cp}} \, \widetilde{\H}_{23}^{(0)} \, \widetilde{\F} \, 
\widetilde{\v}_{2}$  with $\Hcal_{23} \, \Fxsu \, \v_{2}$.
Second, by assuming that $\v_{\PU}$ is a ZMCSCG random vector
with (diagonal) correlation matrix
\be
\Rvvpu \eqdef \expect[\v_{\PU} \, \v_{\PU}^\herm] = \sigma^2_{23} \,
(\sigma_{v_2}^2 \, \bm{\Sigma}_{\A} +\bm{\Sigma}_{\G})+ \sigma_{v_3}^2 \, \I_{M}
\label{eq:rvvpu}
\ee
where $\bm{\Sigma}_{\Acal} \eqdef \diag(\|\bm{\alpha}_0\|^2,  \|\bm{\alpha}_1\|^2, \ldots,  \|\bm{\alpha}_{M-1}\|^2)$ and 
$\bm{\Sigma}_{\Gcal} \eqdef \diag(\|\bm{\gamma}_0\|^2,  \|\bm{\gamma}_1\|^2, \ldots,  \|\bm{\gamma}_{M-1}\|^2)$,
and remembering also that
the input distribution maximizing the
capacity of the channel in \eqref{y_PU} is the
ZMCSCG distribution \cite{Gallager_book, CoverThomas_book}, too, that is, $\x_{\PU}$
is a ZMCSCG with correlation matrix
$\expect[\x_{\PU} \, \x_{\PU}^\herm] = \PPU \, \I_Q$,
the conditional mutual information
$\MI(\x_{\PU},\y_{\PU}\,|\, \Hcal_{\PU})$ under an 
average transmit power constraint
is lower bounded \cite{Gallager_book, CoverThomas_book} 
as\sfootnote{The loss in spectral efficiency due to the presence of the CP  
is neglected throughout our capacity analysis.}
\be
\MI(\x_{\PU},\y_{\PU}\,|\, \Hcal_{\PU}) \ge
\frac{1}{M} \, \log_2 \left[ \displaystyle \frac{\det(\Rvvpu + \PPU \, \Hcal_{\PU} \, \bm{\Theta} \, \bm{\Theta}^\trasp \, \Hcal_{\PU}^*)}{\det(\Rvvpu)} \right]
\label{eq:MIbound}
\ee
where $\expect[\|\x_{\PU}\|^2]= Q \, \PPU$ is the transmit power
and the matrix $\Rvvpu$ is nonsingular, i.e., $\det(\Rvvpu) \neq 0$,
for SNR values of practical interest.

The ergodic capacity
is given by $\CPU \eqdef \expect[\MI(\x_{\PU},\y_{\PU}\,|\, \Hcal_{\PU})]$,
where the ensemble average is taken with respect to $\Hcal_{\PU}$.
By virtue of \eqref{eq:MIbound}, it follows that
\barr
\CPU  \ge \CPUlower & \eqdef \frac{1}{M} \, \expect \left\{{\log_2 \left[ \displaystyle \frac{\det(\Rvvpu + \PPU \, \Hcal_{\PU} \, \bm{\Theta} \, \bm{\Theta}^\trasp \, \Hcal_{\PU}^*)}{\det(\Rvvpu)} \right]}\right\}
\nonumber \\ & =
\frac{1}{M} \sum_{m=0}^{M-1} \expect \left\{ \log_2 \left[ 1+
\frac{\PPU \, |H_{\PU}(m)|^2 \, \beta_m}{\sigma^2_{23} \,
(\sigma_{v_2}^2 \, \|\bm{\alpha}_m\|^2  + \|\bm{\gamma}_m\|^2)+ \sigma_{v_3}^2}\right] \right\}
\nonumber \\ & =
\frac{1}{M} \sum_{m \in \IPUuc} \expect \left[ \log_2 \left( 1+
\frac{\PPU \, |H_{\PU}(m)|^2}{\sigma^2_{23} \,
\sigma_{v_2}^2 \, \|\bm{\alpha}_m\|^2 + \sigma_{v_3}^2}\right) \right]
\label{eq:E-capacity}
\earr
where we have remembered that $\beta_m=1$ and $\bm{\gamma}_m=\mathbf{0}_N$ if $m \in \IPUuc$, whereas
$\beta_m=0$ otherwise.
It is noteworthy that equality in \eqref{eq:E-capacity} holds when
the SU is inactive, i.e., $\Acal=\Gcal=\Ob_{M \times N}$: indeed, in this case,
it results that $\v_{\PU} \equiv \v_{3}$ is a ZMCSCG random vector
with correlation matrix
$\Rvvpu =\sigma_{v_3}^2 \, \I_{M}$ and $\CPU$
ends up to the ergodic capacity $\CPUd$ of the direct PU link, given by
\barr
\CPUd & =
\frac{1}{M} \sum_{m \in \IPUuc} \expect \left[ \log_2 \left( 1+
\frac{\PPU \, |H_{13}(m)|^2}{\sigma_{v_3}^2}\right) \right]
\nonumber \\ & = \frac{Q \, \log_2(e)}{M}  \, \Psi(\ASNRd)
\label{eq:Ed-capacity}
\earr
where $\ASNRd \eqdef (\sigma_{13}^2 \, \PPU)/\sigma_{v_3}^2$
is the average SNR at the PRx when $\Acal=\Gcal=\Ob_{M \times N}$,
$\Psi(A)  \eqdef \int_{0}^{+\infty} e^{-u} 
\, \ln(1+ A \, u) \, \mathrm{d}u$,\sfootnote{\label{foot:phi}It is seen \cite{Ozarow} that
\[
\Psi(A) =
- e^{\frac{1}{A}} \, \mathrm{Ei}\left(-\frac{1}
{A}\right)
\approx
\begin{cases}
A  \: ,                & \text{for $0 < A \ll 1$;} \\
\ln(1+A)-\gamma \: ,   & \text{for $A \gg 1$.}
\end{cases}
\]
where, for $x < 0$,
\[
\text{Ei}(x) \eqdef  \int_{-\infty}^{x} \frac{e^{u}}{u} \,
\mathrm{d}u = \gamma + \ln(-x) + \sum_{k=1}^{+\infty}\frac{x^k}{k! \, k}
\label{Ei}
\]
denotes the exponential integral function
and
$
\gamma \eqdef \lim_{n \to \infty}\left( n^{-1} \sum_{k=1}^{n} k^{-1} - \ln n \right)
\approx  0.57721
$
is the Euler-Mascheroni constant.} with $A>0$, 
and we have used the fact that
$|H_{13}(m)|^2$ has an exponential distribution
with mean $\sigma_{13}^2$.

The degree of difficulty in evaluating the expectation in \eqref{eq:E-capacity}
depends on the choice of the precoding matrix $\bm{\Delta}$, which might be optimized
to enhance the performance of the SU system and, hence, may be
a function of the relevant channel coefficients (see Section~\ref{sec:analysis-su}).
To obtain easily interpretable analytical
results,  we assume that $\{\|\bm{\alpha}_m\|^2\}_{m \in \IPUuc}$
and $\{\|\bm{\gamma}_m\|^2\}_{m \in \IPUvc}$
are independent on the realization of the channels, 
which happens, e.g., when a uniform power
allocation strategy is employed
by the STx over its used subcarriers.
In this case,
it is useful to observe from \eqref{eq:HPUentry}
that, conditioned on $H_{23}(m) \, F^{(n)}(m)$, one obtains
that $H_{\PU}(m)$ is a ZMCSCG random variable  with
variance
$\sigma_{13}^2 + \sigma_{12}^2  \, |H_{23}(m)|^2 \, |F^{(n)}(m)|^2$,
$\forall m \in \{0,1,\ldots,M-1\}$,
whose squared magnitude is exponentially distributed
with mean $\sigma_{13}^2 + \sigma_{12}^2  \, |H_{23}(m)|^2 
\, |F^{(n)}(m)|^2$.
By applying the conditional expectation rule \cite{Papoulis}, one has
\be
\CPU \ge \CPUlower =\frac{\log_2(e)}{M} \sum_{m \in \IPUuc}
\expect[\Psi(\ASNRP)]
\label{eq:EE-capacity}
\ee
with
\be
\ASNRP \eqdef
\ASNRd \,
\frac{1 + |H_{23}(m)|^2 \, |\bm{\alpha}^\herm_m \, \xsuuc|^2 \,  \frac{\sigma_{12}^2}{\sigma_{13}^{2}}}
{1+ \|\bm{\alpha}_m\|^2 \, \sigma_{23}^2  \, \frac{\sigma_{v_2}^2}{\sigma_{v_3}^{2}}}
\label{eq:gamma3m}
\ee
where we have remembered that
$F^{(n)}(m) = \bm{\alpha}^\herm_m \, \xsuuc$.
The lower bound \eqref{eq:EE-capacity} boils down to
that reported in \cite{Verde1,Verde2} when
$\Mvc=0$, $\Lpre=0$ or, equivalently, $N=1$, $\bm{\alpha}_m \equiv \alpha_m=1$,
and $\bm{\gamma}_m \equiv \gamma_m=0$,
for each $m \in \{0,1,\ldots,M-1\}$.

The numerator in \eqref{eq:gamma3m} is the gain
(with respect to the direct PU link) due to AF relaying,
whereas its denominator is the performance loss caused
by noise propagation from the STx to the PRx.
As intuitively expected, if the SU does not transmit over all the $Q$ subcarriers
used by the PU ({\em conventional CR scenario}), i.e., $\bm{\alpha}_m=\ob_N$
for each $m \in \IPUuc$,
one has $\ASNRP = \ASNRd$ and, hence, $\CPU =\CPUd$.
In contrast, in our framework, the $m$th subcarrier is simultaneously used by both the PU and the SU, with $m \in \IPUuc$.
By resorting to the law of total expectation  \cite{Papoulis}, it results that
\begin{multline}
\expect[\Psi(\ASNRP)] = \expect[\Psi(\ASNRP) \,|\, \ASNRP \ge \ASNRd] \,
[1-\Prob(\ASNRP < \ASNRd)] + \\
\expect[\Psi(\ASNRP) \,|\, \ASNRP < \ASNRd] \, \Prob(\ASNRP < \ASNRd) \: .
\end{multline}
It is noteworthy that, if $\Prob(\ASNRP < \ASNRd)  \rightarrow 0$, then
\be
\expect[\Psi(\ASNRP)] = \expect[\Psi(\ASNRP) \,|\, \ASNRP \ge \ASNRd] \ge
\Psi(\ASNRd) \: ,
\quad \forall m \in \IPUuc 
\label{eq:iiqq}
\ee
where the inequality comes from the fact that
$\Psi(A)$ is a monotonically
increasing function of $A \ge 0$.
Bearing in mind \eqref{eq:Ed-capacity} and
\eqref{eq:EE-capacity}, inequality \eqref{eq:iiqq}
implies that
$\CPU \ge \CPUd$. Remarkably, in the presence of the concurrent SU transmission,
the capacity of the PU cannot degrade if
$\Prob(\ASNRP < \ASNRd)$ is negligibly small.
Therefore,  we say that the PU system is in {\em outage}
when $\ASNRP < \ASNRd$ and we will
refer to $\PPUout \eqdef \Prob(\ASNRP < \ASNRd)$ as the {\em outage probability}
of the PU system.
Evaluation of $\PPUout$ requires
the calculation of the cumulative distribution function
$p_m(z) \eqdef P(|H_{23}(m)|^2 \, |\bm{\alpha}^\herm_m
\, \xsuuc|^2 \le z)$ of the random variable
$|H_{23}(m)|^2 \, |\bm{\alpha}^\herm_m \, \xsuuc|^2$,
with $z \ge 0$.
To this aim, we remember that $H_{23}(m)$ is
a ZMCSCG random variable with variance $\sigma_{23}^2$
and, hereinafter, we additionally assume that $\xsuuc$
is a ZMCSCG random vector with correlation matrix
$\expect[\xsuuc \, \xsuuc^\herm] = \I_Q$.
Consequently, it results that
$|H_{23}(m)|^2$ and $|\bm{\alpha}^\herm_m \, \xsuuc|^2$
are independent exponential random  variables
with mean $\sigma_{23}^2$ and
$\|\bm{\alpha}_m\|^2$, respectively, which leads to
$p_m(z) \equiv 0$ for $z<0$, whereas (see, e.g., \cite{Papoulis})
\be
p_m(z) = 1-\frac{1}{\sigma_{23}^2}
\int_{0}^{+ \infty} e^{-\left(\frac{x}{\sigma_{23}^2}+
\frac{z}{x \, \|\bm{\alpha}_m\|^2}\right)}
\, {\rm d}x =
1-\frac{2 \,  \sqrt{z}}{\sigma_{23} \, \|\bm{\alpha}_m\|} \,
K_1\left(\frac{2 \,  \sqrt{z}}{\sigma_{23} \, \|\bm{\alpha}_m\|}\right)
\quad
(z \ge 0)
\ee
where $K_{\alpha}(x)$ is the modified Bessel function  of the third kind and order $\alpha$, with $x >0$.\sfootnote{\label{foot:bessel}As by definition (see, e.g., \cite{AbraSteng_book}) 
\[
K_{\alpha}(x) \eqdef \frac{\sqrt{\pi} \, x^\alpha}{2^\alpha \, \Gamma(\alpha+1/2)}
\int_{0}^{+\infty} e^{-x \, t} \, (t^2-1)^{\alpha-1/2} \, {\rm d}t
\]
where $\Gamma(x) \eqdef \int_{0}^{+\infty} t^{x-1} \, e^{-t} \, {\rm d}t$ is the Gamma function. It results that $\Gamma(1/2)=\sqrt{\pi}$ and $\Gamma(3/2)=\sqrt{\pi}/2$. Moreover, 
for any $p>0$ and $q>0$, it results that (see \cite[Eq.~2.3.16.1]{Prud_book})
\[
\int_{0}^{+\infty} x^{\alpha-1} \, e^{-\left(p \, x + \frac{q}{x}\right)} \, {\rm d}x =
2 \left(\frac{q}{p}\right)^{{\alpha}/{2}} K_\alpha(2 \sqrt{p \, q}) \: .
\]
}
Accounting for
\eqref{eq:gamma3m}, it follows that
\be
\PPUout =p_m\left( \|\bm{\alpha}_m\|^2 \, \frac{\sigma_{23}^2 \, \sigma_{13}^2}{\sigma_{12}^2}  \, \frac{\sigma_{v_2}^2}{\sigma_{v_3}^{2}} \right) =
1- 2 \, \frac{\sigma_{13}}{\sigma_{12}}
\, \frac{\sigma_{v_2}}{\sigma_{v_3}} \,
K_1\left( 2 \, \frac{\sigma_{13}}{\sigma_{12}}
\, \frac{\sigma_{v_2}}{\sigma_{v_3}} \right) \: .
\label{eq:prob_out}
\ee
It is noteworthy that the outage probability of the PU system
does not depend on the precoding matrix of the STx
and $\PPUout \equiv \mathsf{Prob}_{\PU,\text{out}}$.
Henceforth, the following mathematical condition
\be
2 \, \frac{\sigma_{13}}{\sigma_{12}}
\, \frac{\sigma_{v_2}}{\sigma_{v_3}} \,
K_1\left( 2 \, \frac{\sigma_{13}}{\sigma_{12}}
\, \frac{\sigma_{v_2}}{\sigma_{v_3}} \right)  \rightarrow 1
\label{eq:cond-1}
\ee
ensures that the outage probability of the PU system tends to zero and,
thus, $\CPU \ge \CPUd$.

In order to find the solution of eq.~\eqref{eq:cond-1}
with respect to $x \equiv 2 \, ({\sigma_{13}}/{\sigma_{12}})
\, ({\sigma_{v_2}}/{\sigma_{v_3}})$,
it is useful to consider the limiting form of the Bessel function $K_1(x)$
for small argument: when $x \rightarrow 0$, it results \cite[Eq.~9.7.2]{AbraSteng_book} that  $K_1(x) \sim 1/x$; therefore, equation
$x \, K_1(x) \rightarrow 1$ is satisfied for $x$ close to zero. This implies that
eq.~\eqref{eq:cond-1} is fulfilled when
\be
\frac{\sigma_{13}}{\sigma_{12}}
\, \frac{\sigma_{v_2}}{\sigma_{v_3}} \rightarrow 0
\label{eq:cond-2}
\ee
that is, in practical terms, when $\sigma_{13}/\sigma_{12}$ is much smaller
than $\sigma_{v_3}/\sigma_{v_2}$. In this case, it is interesting to observe
from \eqref{eq:gamma3m} that $\expect(\ASNRP)$ turns out to be much greater
than  $\ASNRd$.  In other words,
to achieve the performance gain $\CPU \ge \CPUd$,
the favourable effect of AF relaying has to be
predominant {\em on average} with respect to the adverse phenomenon of 
noise propagation. 

Let us specialize condition \eqref{eq:cond-2}
to a case of practical interest. To this end, we assume that: (i)
$\sigma_{i \ell}^2 = d_{i \ell}^{-\eta}$,
where $d_{i \ell}$ is the distance between
nodes $i$ and $\ell$, and
 $\eta$ denotes the path-loss exponent;
(ii) nodes 2 and 3 (approximatively)
have the same noise figure, i.e.,
$\sigma_{v_2}^2 \approx \sigma_{v_3}^2$.
Under these assumptions, condition
\eqref{eq:cond-2} ends up to $d_{12}/d_{13} \rightarrow 0$:
the outage probability of the PU system is vanishingly small
when the distance $d_{13}$ between the PTx and the PRx
is significantly greater than the distance
between the PTx and the STx (see Fig.~\ref{fig:figure_1}).

A final remark is now in order regarding the dependence of
$\CPUlower$ on the power budget $\PSU$ of the SU.
Accounting for \eqref{eq:power}, it follows that, 
$\forall m \in \mathcal{M}$,
\be
\|\bm{\alpha}_m\|^2= \frac{\PSU-\sum_{\ell \in \IPUvc}
 \|\bm{\gamma}_\ell\|^2}{
\sigma_{12}^2 \, \PPU + \sigma^2_{v_2}} - 
\sum_{\shortstack{\footnotesize $\ell \in \IPUuc$ \\
\footnotesize $\ell \neq m$}} \|\bm{\alpha}_\ell\|^2
\label{eq:a}
\ee
The following Lemma unveils the relationship between $\CPUlower$  and $\PSU$:
\begin{lemma}
\label{lem:cpu}
If \eqref{eq:cond-2} holds, then
$\CPUlower$ in \eqref{eq:EE-capacity} is a
monotonically increasing function of $\PSU$.
\end{lemma}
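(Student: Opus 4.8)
The plan is to reduce the claim to a per-subcarrier monotonicity statement and then to isolate the PU outage event --- which, crucially, does not depend on the SU precoder --- so as to turn that statement into a realization-by-realization one.

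Concretely, fix $m\in\IPUuc$ and set $t\eqdef\|\bm{\alpha}_m\|^2\ge 0$. Since $H_{23}(m)$ is ZMCSCG with variance $\sigma_{23}^2$ and $\xsuuc$ is ZMCSCG with identity covariance, we may write $|H_{23}(m)|^2=\sigma_{23}^2\,W$ and $|\bm{\alpha}_m^\herm\,\xsuuc|^2=t\,Z$ with $W,Z$ independent unit-mean exponential RVs, so that \eqref{eq:gamma3m} becomes $\ASNRP=\ASNRd\,(1+t\,\lambda\,WZ)/(1+t\,\mu)$ with $\lambda\eqdef\sigma_{23}^2\sigma_{12}^2/\sigma_{13}^2>0$ and $\mu\eqdef\sigma_{23}^2\sigma_{v_2}^2/\sigma_{v_3}^2\ge 0$. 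Moreover, by \eqref{eq:power} and \eqref{eq:a}, once the power reserved for the VCs and the relative split among the used subcarriers are frozen (e.g.\ under uniform allocation $t=(\PSU-\sum_{\ell\in\IPUvc}\|\bm{\gamma}_\ell\|^2)/[Q(\sigma_{12}^2\PPU+\sigma_{v_2}^2)]$), each $\|\bm{\alpha}_m\|^2$ is an affine, strictly increasing function of $\PSU$. Hence it suffices to show that, for every $m\in\IPUuc$, $\expect[\Psi(\ASNRP)]$ is a nondecreasing --- and strictly increasing --- function of $t$; summing over $m\in\IPUuc$ in \eqref{eq:EE-capacity} then yields the claim for $\CPUlower$.

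The pivotal remark is that $\{\ASNRP\ge\ASNRd\}$ coincides with the event $E\eqdef\{\lambda WZ\ge\mu\}=\{WZ\ge(\sigma_{13}/\sigma_{12})^2(\sigma_{v_2}/\sigma_{v_3})^2\}$, which is independent of $t$, and that $\Prob(E^{\mathrm c})=\PPUout$ as in \eqref{eq:prob_out}, again independent of $t$. Split $\expect[\Psi(\ASNRP)]=\expect[\Psi(\ASNRP)\,\mathbb{1}_E]+\expect[\Psi(\ASNRP)\,\mathbb{1}_{E^{\mathrm c}}]$. The second term is nonnegative and at most $\Psi(\ASNRd)\,\PPUout$ (because $\Psi$ is increasing and $\ASNRP<\ASNRd$ on $E^{\mathrm c}$), hence vanishes in the regime where \eqref{eq:cond-2} holds, since then $\PPUout\to 0$. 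For the first term, on $E$ one has $\partial\ASNRP/\partial t=\ASNRd\,(\lambda WZ-\mu)/(1+t\mu)^2\ge 0$, i.e.\ for every channel realization in $E$ the quantity $\ASNRP$ is nondecreasing in $t$; since $\Psi$ is (strictly) increasing --- cf.\ footnote~\ref{foot:phi} --- the map $t\mapsto\Psi(\ASNRP)\,\mathbb{1}_E$ is, outcome by outcome, nondecreasing, so $\expect[\Psi(\ASNRP)\,\mathbb{1}_E]$ is nondecreasing in $t$, and strictly so because $\{\lambda WZ>\mu\}\subseteq E$ has positive probability with a strictly positive derivative there. Combining the two pieces proves monotonicity in $t$, hence --- via the previous paragraph --- that $\CPUlower$ is monotonically increasing in $\PSU$.

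The delicate point, and the role of condition \eqref{eq:cond-2}, is that $\ASNRP$ is \emph{not} monotone in $\|\bm{\alpha}_m\|^2$ for all channel realizations: on the event $WZ<\mu/\lambda$ --- where the noise propagated from the STx to the PRx overwhelms the amplify-and-forward gain --- raising $\|\bm{\alpha}_m\|^2$ actually lowers $\ASNRP$. Condition \eqref{eq:cond-2} is exactly what drives the probability of this event to zero, and the natural way to exploit it is to peel it off using the outage event $E$; what makes the peeling clean is that $E$ is $t$-independent, so conditioning on it reintroduces no dependence on $t$. A non-asymptotic version follows along the same decomposition: it is exact when $\PPUout=0$ (e.g.\ $\sigma_{v_2}=0$), and otherwise one differentiates under the integral sign and bounds the negative contribution of $E^{\mathrm c}$ by $\ASNRd\,\mu\,\PPUout$ (using $0\le\Psi'\le 1$ and $|\lambda WZ-\mu|\le\mu$ on $E^{\mathrm c}$) against a positive contribution that stays bounded away from zero uniformly in the limiting parameter.
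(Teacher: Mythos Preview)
Your argument is correct and rests on the same pivotal observation as the paper's proof: the sign of $\partial\ASNRP/\partial t$ is that of $\lambda WZ-\mu$, and under \eqref{eq:cond-2} the ``bad'' region $\{\lambda WZ<\mu\}$ has vanishing probability. The paper, however, organizes the computation differently: it differentiates $\expect[\Psi(\ASNRP)]$ with respect to $\|\bm{\alpha}_m\|^2$ under the integral sign (justifying the interchange via Lebesgue's dominated convergence, with the three regularity conditions spelled out), obtains an integrand whose sign is that of $z\,\sigma_{12}^2/\sigma_{13}^2-\sigma_{23}^2\,\sigma_{v_2}^2/\sigma_{v_3}^2$ with $z$ the realization of $|H_{23}(m)|^2\,|X_m|^2$, and then notes that under \eqref{eq:cond-2} the threshold on $z$ collapses to zero so that the integrand is nonnegative on the whole domain of integration. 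Your route --- conditioning on the outage event $E$, exploiting that $E$ does not depend on $t$, and using realization-by-realization monotonicity of $\Psi(\ASNRP)$ on $E$ --- sidesteps the differentiation-under-the-integral justification altogether and ties the monotonicity statement directly back to $\PPUout$ in \eqref{eq:prob_out}, which is conceptually tidier and better connected to the discussion preceding the lemma. The paper's route, by contrast, makes the derivative explicit, which is the natural starting point if one wants quantitative (non-asymptotic) control of how fast $\CPUlower$ grows with $\PSU$ --- precisely the refinement you sketch in your final paragraph.
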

\begin{IEEEproof}
See Appendix~\ref{app:lemma-cpu}.
\end{IEEEproof}
The statement of Lemma~\ref{lem:cpu} is
in contrast with conventional CR approaches \cite{Haykin2005}, for which
concurrent transmission of the SU is allowed only
if its power is subject to a strict constraint.
Such a result directly comes from the fact that the
STx also acts as a relay for the PU system.

\section{Analysis  of the ergodic capacity and precoding optimization of the SU}
\label{sec:analysis-su}

In this section, we investigate the information-theoretic performance
of the SU and also discuss
how the precoding matrices $\Ccal$ and $\Dcal$
can be optimized to enhance the achievable rate of the SU.
Specifically, we assume that the SRx has perfect knowledge of the matrix ${\Hcal}_{\SU}$, which can be estimated via training 
sent by the STx (see Subsection~\ref{sec:SRx}).\sfootnote{This can
be regarded as a {\em worst case}, since in practice the SRx 
might additionally have knowledge of the training symbols of
the PU system, which may be used to estimate the channel 
impulse response over the $1 \to 4$ link, i.e., $\Hcal_{14}$.}

With reference to the signal model
\eqref{y_SU}, the channel output is represented by
the pair $(\y_{\SU}, {\Hcal}_{\SU})$
and, thus, the mutual information
between channel input and output is given by
$\MI(\x_{\SU},\y_{\SU}\,|\, {\Hcal}_{\SU})$ (in bits/s/Hz).
First, we calculate a lower bound on
$\MI(\x_{\SU},\y_{\SU}\,|\, \Hcal_{\SU})$,
by considering the worst-case distribution for the
equivalent noise term at the SRx under a variance constraint
(see footnote~\ref{foot:worst-noise}),  i.e., $\v_{\SU}$ is modeled as a
ZMCSCG random vector with (diagonal) correlation matrix
$\Rvvsu \eqdef \expect[\v_{\SU} \, \v_{\SU}^\herm] = \PPU \, \sigma^2_{14} \, \bm \Theta \, \bm \Theta^{\trasp} +
\sigma^2_{v_4} \, \I_M$.
By assuming that $\x_{\SU}$ is a ZMCSCG random vector, with correlation
matrix $\expect[\x_{\SU} \, \x_{\SU}^\herm]=\I_N$,
it follows that $\MI(\x_{\SU},\y_{\SU}\,|\, {\Hcal}_{\SU})$ under an average transmitter
constraint is lower bounded as
\be
\MI(\x_{\SU},\y_{\SU}\,|\, {\Hcal}_{\SU})  \geq
\MI_{\text{min}}(\x_{\SU},\y_{\SU}\,|\, {\Hcal}_{\SU})  \eqdef
\frac{1}{M} \,
\log_2 \left[ \displaystyle \frac{\det(\Rvvsu + {\Hcal}_{\SU} \,
\bOmega \, {\Hcal}_{\SU}^\herm)}{\det(\Rvvsu)} \right]
\label{eq:MIboundSU}
\ee
where $\expect[\|\x_{\SU}\|^2]=  N+\Mvc$ is the overall transmit power and $\Rvvsu$ is nonsingular, i.e., $\det(\Rvvsu) \neq 0$,
for SNR values of practical interest,
$\bOmega \eqdef \bm{\Delta} \, \bm{\Delta}^\herm \in \setC^{2 M \times 2 M}$
is a positive-semidefinite Hermitian matrix.\sfootnote{The set of
positive-semidefinite matrices is a closed convex cone \cite{Horn}.}
It is noteworthy that the RHS of \eqref{eq:MIboundSU}
is concave as a function of $\bOmega$ \cite[Thm.~1]{Cover}
and, therefore, it can be maximized with respect to $\bOmega$.
To this aim, using the facts that $\det(\B_1 \, \B_2)=\det(\B_1) \, \det(\B_2)$ and
$\det(\B_1^{-1})=1/\det(\B_1)$, for $\B_1, \B_2 \in \setC^{n \times n}$,
it is readily seen that
\be
\frac{\det(\Rvvsu + {\Hcal}_{\SU} \,
\bOmega \, {\Hcal}_{\SU}^\herm)}{\det(\Rvvsu)} =
\det(\I_M+ \Rvvsu^{-1} \, {\Hcal}_{\SU} \,
\bOmega \, {\Hcal}_{\SU}^\herm) \: .
\label{eq:det}
\ee
By observing that $\Rvvsu$ is diagonal by
construction, Hadamard's inequality \cite{Horn}
implies that the RHS of \eqref{eq:det}
[and, hence, $\MI_{\text{min}}(\x_{\SU},\y_{\SU}\,|\, {\Hcal}_{\SU})$] is maximized when 
\be
{\Hcal}_{\SU} \,
\bOmega \, {\Hcal}_{\SU}^\herm= \H_{\SU} \, \bm{\Pi}_{\text{IDFT}} \, \bm{\Upsilon}_{\text{vc}} \, \Ccal \,  \Ccal^\herm \, \bm{\Upsilon}_{\text{vc}}^\herm \, 
\bm{\Pi}_{\text{IDFT}}^\herm \, {\H}_{\SU}^\herm+
\Hcal_{24} \, \bm{\Xi} \, \Dcal \, \Dcal^\herm \, \bm{\Xi}^\trasp \,  \Hcal_{24}^\herm
\label{eq:diagg}
\ee
is a diagonal matrix. Since $\H_{\SU}$ [see eq.~\eqref{eq:diag-hsu}] and $\Hcal_{24}$ are diagonal matrices,
maximization of $\MI_{\text{min}}(\x_{\SU},\y_{\SU}\,|\, {\Hcal}_{\SU})$ can be obtained by imposing that the matrices $ \bm{\Pi}_{\text{IDFT}} \, \bm{\Upsilon}_{\text{vc}} \, \Ccal \,  \Ccal^\herm \, \bm{\Upsilon}_{\text{vc}}^\herm \, 
\bm{\Pi}_{\text{IDFT}}^\herm$ and $\bm{\Xi} \, \Dcal \, \Dcal^\herm \, \bm{\Xi}^\trasp$ are diagonal, too. Therefore, we impose that 
$ \bm{\Pi}_{\text{IDFT}} \, \bm{\Upsilon}_{\text{vc}} \, \Ccal \,  \Ccal^\herm \, \bm{\Upsilon}_{\text{vc}}^\herm \, 
\bm{\Pi}_{\text{IDFT}}^\herm= \bm{\Sigma}_{\Acal}$
and $\bm{\Xi} \, \Dcal \, \Dcal^\herm \, \bm{\Xi}^\trasp=\bm{\Sigma}_{\Gcal}$,
where $\bm{\Sigma}_{\Acal}$ and $\bm{\Sigma}_{\Gcal}$ have been previously defined in \eqref{eq:rvvpu}, whose particular solutions can be expressed as 
\barr
\Ccal \,  \Ccal^\herm & = \bm{\Upsilon}_{\text{vc}}^\herm \, 
\bm{\Pi}_{\text{IDFT}}^\herm \, \bm{\Sigma}_{\Acal} \, 
\bm{\Pi}_{\text{IDFT}} \, \bm{\Upsilon}_{\text{vc}}
\label{eq:cc}
\\
\Dcal \, \Dcal^\herm  & = \bm{\Xi}^\trasp \, \bm{\Sigma}_{\Gcal} 
\, \bm{\Xi} \:.
\label{eq:dd}
\earr
In this case, by virtue of \eqref{eq:MIboundSU},
\eqref{eq:diagg}, \eqref{eq:cc}, 	and \eqref{eq:dd}, 
and remembering that we have imposed  
$\bm{\alpha}_m=\mathbf{0}_N$, $\forall m \in \IPUvc$,
and 
$\bm{\gamma}_m=\mathbf{0}_\Mvc$, $\forall m \in \IPUuc$, one has
\begin{multline}
\MI_{\text{min}}(\x_{\SU},\y_{\SU}\,|\, {\Hcal}_{\SU}) =
\frac{1}{M} \, \left [
\sum_{m \in \IPUuc} \log_2 \left( 1 + \frac{|H_{\SU}(m)|^2 \,\|\bm{\alpha}_m\|^2}{ \sigma^2_{14} \, \PPU  + \sigma_{v_4}^2} \right) 
\right. \\ \left. +
\sum_{m \in \IPUvc} \log_2 \left( 1 +  \frac{|H_{24}(m)|^2 \, \|\bm{\gamma}_m\|^2}{\sigma_{v_4}^2} \right) \right ] \:.
\label{eq:I-max}
\end{multline}
By averaging $\MI(\x_{\SU},\y_{\SU}\,|\, \Hcal_{\SU})$ with respect to
the relevant channel parameters, and relying on
\eqref{eq:MIboundSU} and \eqref{eq:I-max}, the  ergodic capacity $\CSU$
of the SU can be lower bounded as follows
\be
\CSU \eqdef \expect[\MI(\x_{\SU},\y_{\SU}\,|\, \Hcal_{\SU})]  
\ge \CSUlower \eqdef
\expect \left[ \MI_{\text{min}}(\x_{\SU},\y_{\SU}\,|\, \Hcal_{\SU}) \right]  \: .
\label{eq:E-capacity-SU}
\ee
It is worth noticing that the capacity of the SU is essentially
limited by the variance $\sigma^2_{14} \, \PPU  + \sigma_{v_4}^2$
of the equivalent noise term $\v_{\SU}$ at the SRx
(see~Subsection~\ref{sec:SRx}).
Evaluation of the expectation in \eqref{eq:E-capacity-SU}
depends on the choice of the scalar variables
$a_m \eqdef \|\bm{\alpha}_m\|^2$, for $m \in \IPUuc$,
and $g_m \eqdef \|\bm{\gamma}_m\|^2$, for $m \in \IPUvc$,
which in its turn may depend on the CSI at the transmitter (CSIT)
of the SU system. In the following two subsections, we
separately consider two relevant scenarios.

\subsection{CSIT scenario}
\label{sec:csit}

In this scenario, the STx has perfect knowledge of the channel matrix
$\Hcal_{\SU}$, which allows one to  further maximize
the mutual information between channel input and output.
Channel estimation at the transmitter requires either
a feedback channel or the application of the channel reciprocity
property when the same carrier frequency is used for transmission and
reception. Henceforth, accounting for \eqref{eq:I-max},
we propose to solve the following optimization problem
\be
\arg \max_{\shortstack{\footnotesize $\{a_m\}_{m \in \IPUuc}$
\\ \footnotesize $\{g_m\}_{m \in \IPUvc}$}}
\left[ \sum_{m \in \IPUuc} \log_2 \left(1 +
\frac{|H_{\SU}(m)|^2 \, a_m}{\sigma^2_{14} \, \PPU + \sigma_{v_4}^2} \right)
+ \sum_{m \in \IPUvc} \log_2 \left( 1 +  \frac{|H_{24}(m)|^2 \, g_m}{\sigma_{v_4}^2} \right) \right ]
\label{eq:max-min-2}
\ee
subject to  the power constraint [see \eqref{eq:power}]
\be
(\sigma_{12}^2 \, \PPU + \sigma^2_{v_2}) \sum_{m \in \IPUuc}
a_m + \sum_{m \in \IPUvc} g_m = \PSU \: .
\label{eq:power-2}
\ee
The solution of such a problem is given by the following Lemma:
\begin{lemma}
\label{lem:water}
Problem \eqref{eq:max-min-2}--\eqref{eq:power-2}
admits the following {\em waterfilling} solution
\barr
a_{m, \text{opt}} &= \left[\mu - \frac{\sigma^2_{14} \, \PPU+ \sigma_{v_4}^2}
{|H_{\SU}(m)|^2}\right]^{+}
\:, \quad \forall m \in \IPUuc
\label{eq:waterfilling-a}
\\
g_{m, \text{opt}} &= \left[\mu - \frac{\sigma_{v_4}^2}
{|H_{24}(m)|^2}\right]^{+}
\:, \quad \forall m \in \IPUvc
\label{eq:waterfilling-g}
\earr
where the constant $\mu$ is chosen so as to fulfil
the constraint
\be
(\sigma_{12}^2 \, \PPU + \sigma^2_{v_2}) \sum_{m \in \IPUuc}
\left[\mu - \frac{\sigma^2_{14} \, \PPU+ \sigma_{v_4}^2}
{|H_{\SU}(m)|^2}\right]^{+} +
\sum_{m \in \IPUvc} \left[\mu - \frac{\sigma_{v_4}^2}
{|H_{24}(m)|^2}\right]^{+} = \PSU \: .
\label{eq:mu}
\ee
\end{lemma}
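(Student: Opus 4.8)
The plan is to treat \eqref{eq:max-min-2}--\eqref{eq:power-2} as a standard convex program in the real variables $\{a_m\}_{m\in\IPUuc}$ and $\{g_m\}_{m\in\IPUvc}$, which are implicitly nonnegative since they are squared Euclidean norms. First I would observe that each summand $\log_2\!\big(1+|H_{\SU}(m)|^2 a_m/(\sigma^2_{14}\PPU+\sigma_{v_4}^2)\big)$ is concave and strictly increasing in $a_m$, and likewise for the $g_m$ terms, so the objective is jointly concave; the feasible set is the intersection of the nonnegative orthant with the hyperplane \eqref{eq:power-2}, hence convex and (since every coordinate is bounded by $\PSU$ divided by its positive coefficient) compact. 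Therefore a maximizer exists, and because the objective is concave and all constraints are affine the Karush--Kuhn--Tucker conditions are both necessary and sufficient for optimality.

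Next I would form the Lagrangian, attaching a single multiplier $\lambda$ to the equality \eqref{eq:power-2} and nonnegative multipliers $\{\nu_m\}_{m\in\IPUuc}$, $\{\xi_m\}_{m\in\IPUvc}$ to the constraints $a_m\ge 0$, $g_m\ge 0$; here $\lambda>0$ because the objective is increasing in each variable. Stationarity in $a_m$ gives $|H_{\SU}(m)|^2/\big[\ln 2\,(\sigma^2_{14}\PPU+\sigma_{v_4}^2+|H_{\SU}(m)|^2 a_m)\big] = \lambda(\sigma^2_{12}\PPU+\sigma^2_{v_2}) - \nu_m$, and an analogous identity holds for each $g_m$. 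Invoking complementary slackness ($\nu_m a_m = 0$, $\xi_m g_m = 0$) and splitting into the ``active'' and ``inactive'' cases, the stationarity conditions collapse to the threshold form: $a_m$ is strictly positive and equals $\mu - (\sigma^2_{14}\PPU+\sigma_{v_4}^2)/|H_{\SU}(m)|^2$ exactly when this quantity is positive, and $a_m=0$ otherwise, which is \eqref{eq:waterfilling-a}; the water level $\mu$ is the appropriately normalized reciprocal of $\lambda$, and the same manipulation on the $g_m$ stationarity conditions yields \eqref{eq:waterfilling-g}. Substituting \eqref{eq:waterfilling-a}--\eqref{eq:waterfilling-g} into \eqref{eq:power-2} produces \eqref{eq:mu}, the equation that pins down $\mu$.

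Finally I would close the argument by showing \eqref{eq:mu} has a unique solution: its left-hand side, viewed as a function of $\mu\ge 0$, is continuous, piecewise linear, nondecreasing, equals $0$ at $\mu=0$, and diverges as $\mu\to\infty$, so for every $\PSU>0$ there is exactly one admissible $\mu$. The resulting $(\{a_{m,\text{opt}}\},\{g_{m,\text{opt}}\})$ is feasible by construction and, with the associated $\lambda$ and the multipliers $\nu_m=\xi_m=0$ on the active set (positive on the inactive set), satisfies the full KKT system; hence it is the optimizer, and it is unique by strict concavity of the objective on the support of the solution.

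The step I expect to be the most delicate is the case analysis of the second paragraph: keeping the complementary-slackness bookkeeping straight so that the two regimes glue into the single $[\,\cdot\,]^{+}$ expression, and checking that a \emph{common} water level $\mu$ is consistent across the $\IPUuc$ and $\IPUvc$ blocks once the differing per-unit power costs in \eqref{eq:power-2} are accounted for. The cleanest way to make the latter transparent is the rescaling $a_m \mapsto (\sigma^2_{12}\PPU+\sigma^2_{v_2})\,a_m$, which turns \eqref{eq:power-2} into a uniform sum constraint and reduces the problem to textbook waterfilling; the remaining manipulations are routine.
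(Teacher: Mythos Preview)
Your KKT approach is precisely what the paper intends: its own ``proof'' is a single sentence citing standard optimization concepts in \cite{Tse_book}, so your outline is already far more detailed than the paper's treatment and follows the same route.

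That said, the step you flag as delicate does not in fact close. Carrying your stationarity conditions through, the active variables satisfy
\[
a_m=\frac{1}{\lambda\,\ln 2\,(\sigma_{12}^2\PPU+\sigma_{v_2}^2)}-\frac{\sigma_{14}^2\PPU+\sigma_{v_4}^2}{|H_{\SU}(m)|^2},
\qquad
g_m=\frac{1}{\lambda\,\ln 2}-\frac{\sigma_{v_4}^2}{|H_{24}(m)|^2},
\]
so the two water levels differ by the factor $\sigma_{12}^2\PPU+\sigma_{v_2}^2$. Your proposed rescaling $a_m\mapsto(\sigma_{12}^2\PPU+\sigma_{v_2}^2)\,a_m$ makes this discrepancy explicit rather than removing it: after rescaling you get a common level $\tilde\mu$ for $\tilde a_m$ and $g_m$, but undoing the substitution gives $a_{m,\text{opt}}=[\tilde\mu/(\sigma_{12}^2\PPU+\sigma_{v_2}^2)-\cdots]^+$ while $g_{m,\text{opt}}=[\tilde\mu-\cdots]^+$. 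Hence a \emph{single} $\mu$ cannot appear unchanged in both \eqref{eq:waterfilling-a} and \eqref{eq:waterfilling-g} unless $\sigma_{12}^2\PPU+\sigma_{v_2}^2=1$. This is an inconsistency in the lemma's stated form rather than a defect in your method; your argument correctly proves the variant with $g_{m,\text{opt}}=\big[(\sigma_{12}^2\PPU+\sigma_{v_2}^2)\,\mu-\sigma_{v_4}^2/|H_{24}(m)|^2\big]^+$ (equivalently, two linked water levels), and the constraint \eqref{eq:mu} adjusted accordingly.
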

\begin{IEEEproof}
The proof is obtained by using standard optimization concepts
(see, e.g., \cite{Tse_book}).
\end{IEEEproof}
In such a CSIT scenario,
the worst-case ergodic channel capacity of the SU can be
obtained by replacing in \eqref{eq:I-max}--\eqref{eq:E-capacity-SU}
$\|\bm{\alpha}_m\|^2$ and $\|\bm{\gamma}_m\|^2$
with $a_{m, \text{opt}}$ and $g_{m, \text{opt}}$, respectively, thus obtaining
\begin{multline}
\CSUloweropt = \frac{1}{M} \, \left \{
\sum_{m \in \IPUuc}\expect \left[ \left(\log_2 \left(
\frac{\mu \, |H_{\SU}(m)|^2}{\sigma^2_{14} \, \PPU + \sigma_{v_4}^2} \right) \right)^{+} \right] 
\right. \\ \left. +
\sum_{m \in \IPUvc} \expect \left[ \left(\log_2 \left( \frac{\mu \, |H_{24}(m)|^2}{\sigma_{v_4}^2} \right) \right)^+ \right ] \right \} \: .
\label{eq:CCSSU}
\end{multline}
Let us assume for simplicity that nodes 2 and 4 (approximatively)
have the same noise figure, i.e.,
$\sigma^2 = \sigma_{v_2}^2 \approx \sigma_{v_4}^2$, it is readily
verified that, even in the presence of CSIT, the first summand of
the ergodic channel capacity $\CSUloweropt$ tends to a
bounded quantity as $\sigma^2 \rightarrow 0$.
In other words, maximization of the
mutual information between channel input and output
does not allow to cope with the interference generated
by the PU on the SU system over the $1 \rightarrow 4$ link.

\subsection{No CSIT (NOCSIT) scenario}
\label{sec:nocsit}

In this scenario, CSIT is not available at the STx.
In such a case, a viable choice consists of uniformly allocating the
power over the subcarriers used by the SU,
i.e., $\|\bm{\alpha}_m\|^2 \equiv a > 0$
for each $m \in \IPUuc$ and $\|\bm{\gamma}_m\|^2 \equiv g > 0$ for
each $m \in \IPUvc$. In this case, eq.~\eqref{eq:a} ends up to
\be
a= \frac{\PSU-\Mvc \, g}{Q \, 
(\sigma_{12}^2 \, \PPU + \sigma^2_{v_2})} \: .
\label{eq:aa}
\ee
Accounting for \eqref{eq:diag-hsu} and \eqref{eq:E-capacity-SU},
the expectation of the first summand of $\MI_{\text{min}}(\x_{\SU},\y_{\SU}\,|\, {\Hcal}_{\SU})$ in \eqref{eq:I-max} can be evaluated by
exploiting the statistical independence
between $H_{12}(m)$ and $v_2(m)$: indeed, conditioned on $H_{24}(m)$ and $x_{\PU}(m)$, $H_{\SU}(m)$ is a
ZMCSCG random variable having variance $|H_{24}(m)|^2
\, (\sigma^2_{12} \, |x_{\PU}(m)|^2 + \sigma^2_{v_2})$, for each
$m \in \IPUuc$, whose squared magnitude is exponentially distributed with mean $|H_{24}(m)|^2 \, (\sigma^2_{12} \, |x_{\PU}(m)|^2 + \sigma^2_{v_2})$.
Therefore, it results from \eqref{eq:I-max}--\eqref{eq:E-capacity-SU} that
\be
\CSUlowerunif =
\frac{\log_2(e)}{M} \, \left \{
\sum_{m \in \IPUuc} \expect[\Psi(\ASNRS)]
+ \Mvc \, \Psi(\ASNRdd) \right \}
\label{eq:EE-capacity-SU-UP}
\ee
with
\be
\ASNRS \eqdef
\frac{|H_{24}(m)|^2 \, (\sigma^2_{12} \, |x_{\PU}(m)|^2  + \sigma^2_{v_2}) \,
(\PSU-\Mvc \, g)}{Q\, (\sigma_{14}^2 \, \PPU + \sigma^2_{v_4}) \, (\sigma_{12}^2 \, \PPU + \sigma^2_{v_2})}
\label{eq:gamma4m}
\ee
and $\ASNRdd \eqdef (\sigma_{24}^2 \, g)/\sigma_{v_4}^2$ representing the average SNR  of the direct link between STx and SRx,
where we have accounted for \eqref{eq:a} and,
regarding the second summand of $\MI_{\text{min}}(\x_{\SU},\y_{\SU}\,|\, \widetilde{\H}_{\SU})$ in \eqref{eq:I-max}, we have used the fact that
$|H_{24}(m)|^2$ has an exponential distribution
with mean $\sigma_{24}^2$.

A particularization of \eqref{eq:EE-capacity-SU-UP} can be obtained
by assuming that the PU symbols are drawn from a constant-modulus
constellation, i.e., $|x_{\PU}(m)|^2=\PPU$. In this case, when  $\ASNRS$
assumes negligible values  on average, i.e.,
$\PSU-\Mvc \, g \ll Q \, (\sigma_{14}^2 \, \PPU + \sigma^2_{v_4})/\sigma_{24}^2$, one obtains  that
(see footnote~\ref{foot:phi})
\barr
\CSUlowerunif & \approx
\frac{\log_2(e)}{M} \, \left [
\sum_{m \in \IPUuc} \expect(\ASNRS)
+ \Mvc \, \Psi(\ASNRdd) \right ]
\nonumber \\ &=
\frac{\log_2(e)}{M} \left[ \ASNRdd \, \frac{\frac{\PSU}{g}-\Mvc}
{1 + \ASNRddd}+ \Mvc \, \Psi(\ASNRdd)\right]
\label{eq:particular-1}
\earr
with $\ASNRddd \eqdef (\sigma_{14}^2 \, \PPU)/\sigma_{v_4}^2$ representing the average SNR  of the direct link between PTx and SRx.
On the contrary, when  $\ASNRS$
assumes large values on average, i.e.,
$\PSU-\Mvc \, g \gg Q \, (\sigma_{14}^2 \, \PPU + \sigma^2_{v_4})/\sigma_{24}^2$, one gets 
that (see footnote~\ref{foot:phi} again)
\barr
\CSUlowerunif & \approx
\frac{\log_2(e)}{M} \, \left \{
\sum_{m \in \IPUuc} \expect[(\ln(1+\ASNRS) - \gamma]
+ \Mvc \, \Psi(\ASNRdd) \right \}
\nonumber \\ &=
\frac{\log_2(e)}{M} \left[ \Psi \left(\frac{\ASNRdd}{Q} \, \frac{\frac{\PSU}{g}-\Mvc}
{1 + \ASNRddd}\right) -\gamma \, Q + \Mvc \, \Psi(\ASNRdd)\right] \: .
\label{eq:particular-2}
\earr
As it is apparent from \eqref{eq:particular-1} and \eqref{eq:particular-2},
due to the equivalent noise term $\v_{\SU}$ at the SRx
(see~Subsection~\ref{sec:SRx}), the
worst-case capacity of the SU is inversely related to the average SNR over
the direct link between the PTx and the SRx, which might be
a limiting factor for the SU ergodic capacity.
Such a potential trouble can be circumvented by allowing 
the SRx to estimate the PU symbol block $\x_{\PU}$
and, consequently,  subtract its contribution from the 
received data. This requires knowledge at the SRx 
of the training protocol of the PU. 

\section{Numerical performance analysis}
\label{sec:simul}

To corroborate our information-theoretic analysis, we report
some results of numerical simulations.
With reference to
Fig.~\ref{fig:figure_1}, we normalize the
distance between the PTx and the PRx,
as well as the transmitting power of the PU,
by setting $d_{13}=1$ and $\PPU=1$, respectively. 
Specifically,  the nodes $1$ (PTx), $3$ (PRx), 
and $4$ (SRx) have coordinates equal 
to $(-0.5,0)$, $(0.5,0)$,  and $(0,2)$, respectively.
In all the plots where the distance $d_{12}$ varies,  the node $2$ 
(STx) moves along the line  joining the nodes $1$ and $2$,
with $\vartheta=\pi/3$ (see Fig.~\ref{fig:figure_1}).
The memory of the discrete-time channels among the nodes is set equal to
$L_{12}=1$, $L_{13} =L_{14}=3$, and $L_{24} =L_{23}=2$, whereas the corresponding time offsets are fixed to $\theta_{12}=1$, 
$\theta_{13}=\theta_{14}=3$, and  $\theta_{24}=\theta_{23}=2$, 
respectively. The path-loss exponent is chosen equal to $\eta=3$. 
According to \eqref{eq:CP} and \eqref{eq:CP-2},
we choose $\Lpre=10$, which leads to 
$N = \Lpre-\Mvc+1=7$.
The symbol blocks $\x_{\PU}$ and
$\x_{\SU}$ are ZMCSCG random vectors, with correlation
matrices $\PPU \, \I_Q$ and
$\PSU \, \I_{N+\Mvc}$, respectively.
Moreover, we set
$\sigma_{v_2}^{2} = \sigma_{v_3}^{2}=\sigma_{v_4}^{2}=\sigma^{2}$.
The ensemble averages (with respect to the fading channels and information-bearing symbols)
in \eqref{eq:EE-capacity}, \eqref{eq:CCSSU}, and \eqref{eq:EE-capacity-SU-UP}
%and the outage probability $P(|H_{23}(m)|^2 \, |\bm{\alpha}^\herm_m \, \xsuuc|^2 \le z)$ 
are evaluated  through $10^6$ Monte Carlo trials.

\begin{figure*}[t!]
\centering
\begin{minipage}[t]{0.45\textwidth}
\includegraphics[width=\linewidth, trim=20 20 40 20]{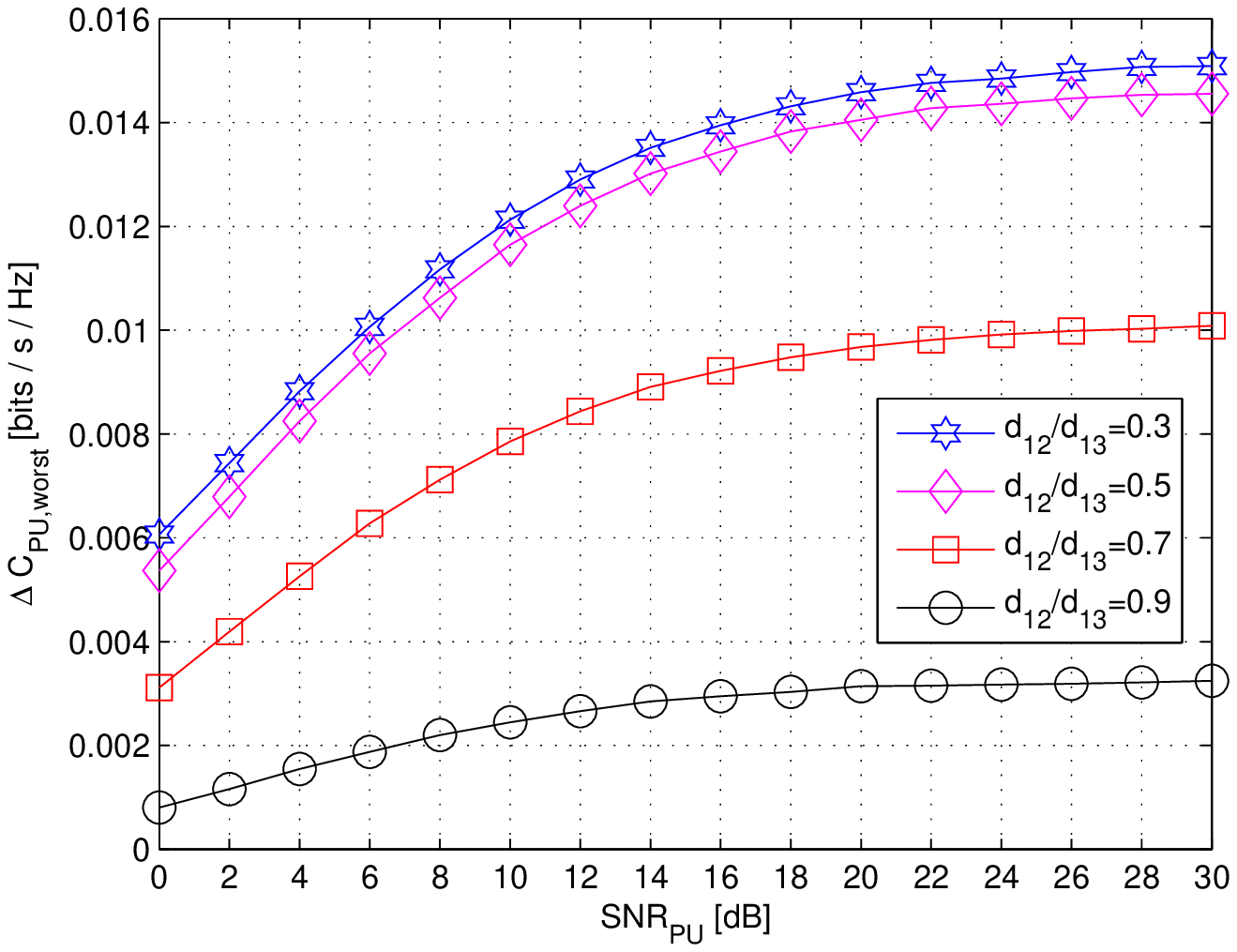}
\caption{$\Delta\CPUworst$ versus $\SNR_{\PU}$ for different values of $d_{12}/d_{13}$ ($\PSU/\PPU=1$).}
\label{fig:fig_3}
\end{minipage}%
\hspace{0.04\textwidth}%
\begin{minipage}[t]{0.45\textwidth}
\includegraphics[width=\linewidth, trim=20 20 40 20]{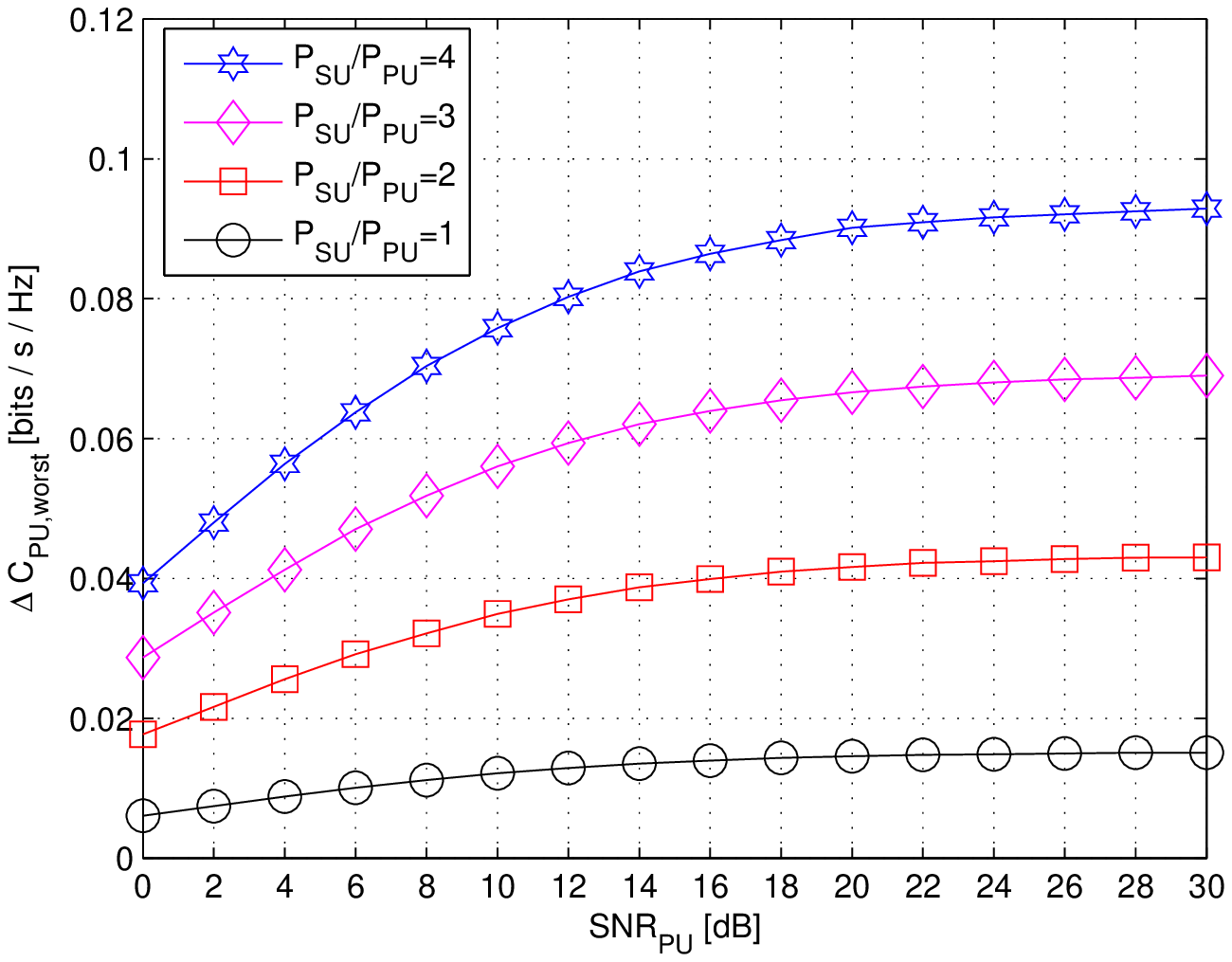}
\caption{$\Delta\CPUworst$ versus $\SNR_{\PU}$ for different values of
$\PSU/\PPU$ ($d_{12}/d_{13}=0.3$).}
\label{fig:fig_4}
\end{minipage}
\end{figure*}

\subsection{Performance of the primary system}

Herein, we study the worst-case performance of the primary system,
by assuming a uniform power allocation for the SU 
transmission,\sfootnote{Results non reported here show that the performance
of the PU is not significantly influenced on the way the SU 
encodes its information-bearing symbols.} i.e., 
$\|\bm{\alpha}_m\|^2 \equiv a > 0$,
for each $m \in \IPUuc$, and $\|\bm{\gamma}_m\|^2 \equiv g > 0$, for
each $m \in \IPUvc$, fulfil  \eqref{eq:aa}, with 
$\PSU/\PPU=1$ and $g=\PSU/(2 \, \Mvc)$.

Figs.~\ref{fig:fig_3} and \ref{fig:fig_4} depict 
the (minimum) capacity gain
$\Delta\CPUworst \eqdef \CPUlower-\CPUd$ of the PU
as a function of $\SNR_{\PU} \eqdef \PPU/\sigma^{2}$.
Specifically, different values of the ratio $d_{12}/d_{13}$
are considered in Fig.~\ref{fig:fig_3}, with $\PSU/\PPU = 1$,
whereas the curves in Fig.~\ref{fig:fig_4} are reported
for different values of the power ratio $\PSU/\PPU$,
with $d_{12}/d_{13}=0.3$.
Results show that the PU can unknowingly attain a capacity gain from the concurrent transmission of the SU, which
significantly increases either when the SU is getting closer and closer to the PU
or when the SU system has a power budget to spend greater than that of the PU one. For instance, let us consider the case of a primary Wi-Fi system with 
$1/T_c=20$ MHz: when $\PPU=\PSU$
and $d_{12}/d_{13}=0.3$, it results
from Fig.~\ref{fig:fig_3} that the capacity gain is at least equal to $300$ kbps
at $\SNR_{\PU} > 20$ dB, whereas, when the STx spends 
twice as much power as the PTx, such a gain amounts at least to $1.8$ Mbps
(see Fig.~\ref{fig:fig_4}). 

\begin{figure*}[t!]
\centering
\includegraphics[width=0.95\linewidth, trim=20 40 40 20]{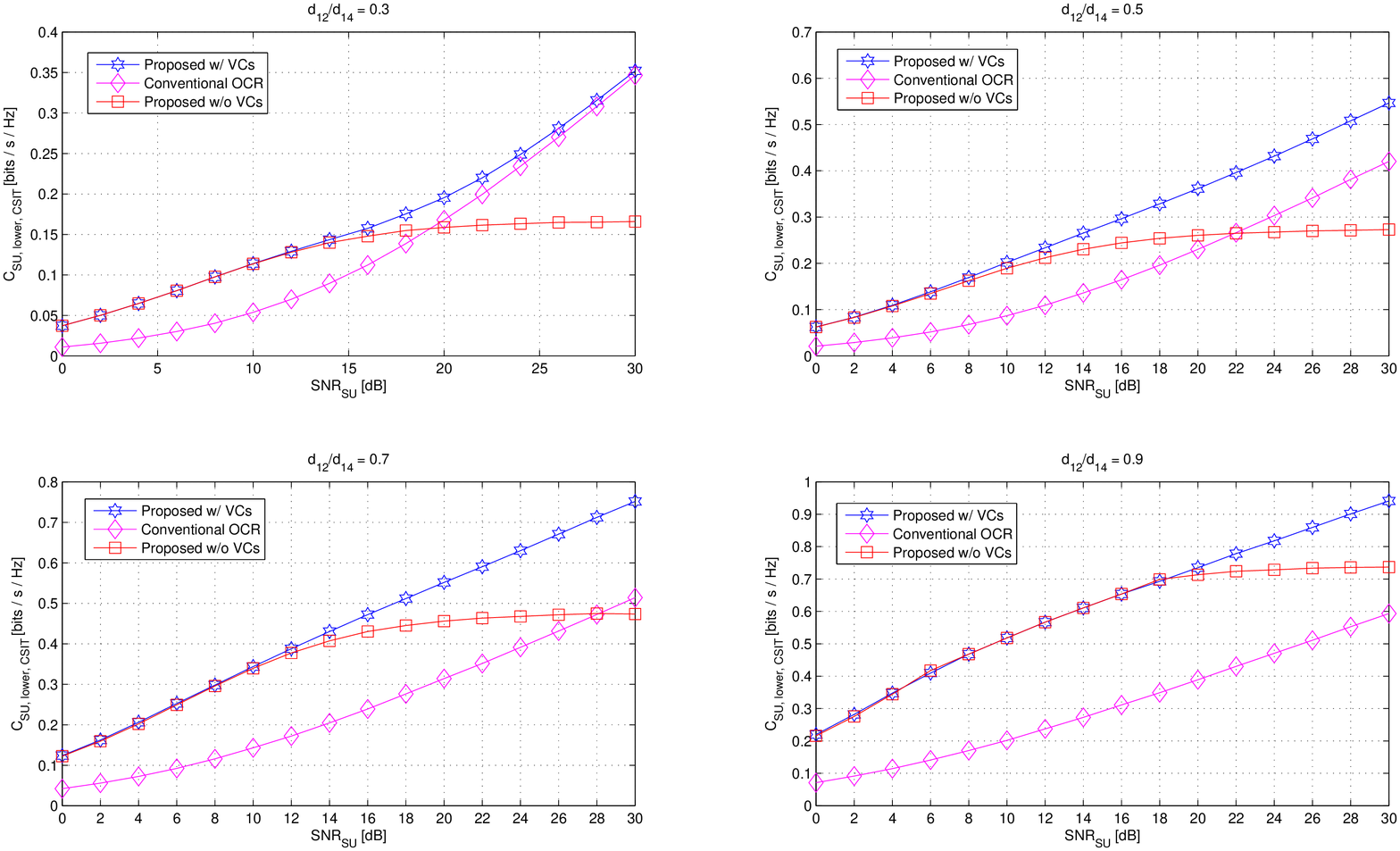}
\caption{$\CSUloweropt$ versus $\SNR_{\SU}$ for 
different values of $d_{12}/d_{14}$ ($\PSU/\PPU=1$
and CSI at the STx).}
\label{fig:fig_6}
\end{figure*}
\begin{figure*}[t!]
\centering
\includegraphics[width=0.95\linewidth, trim=20 40 40 20]{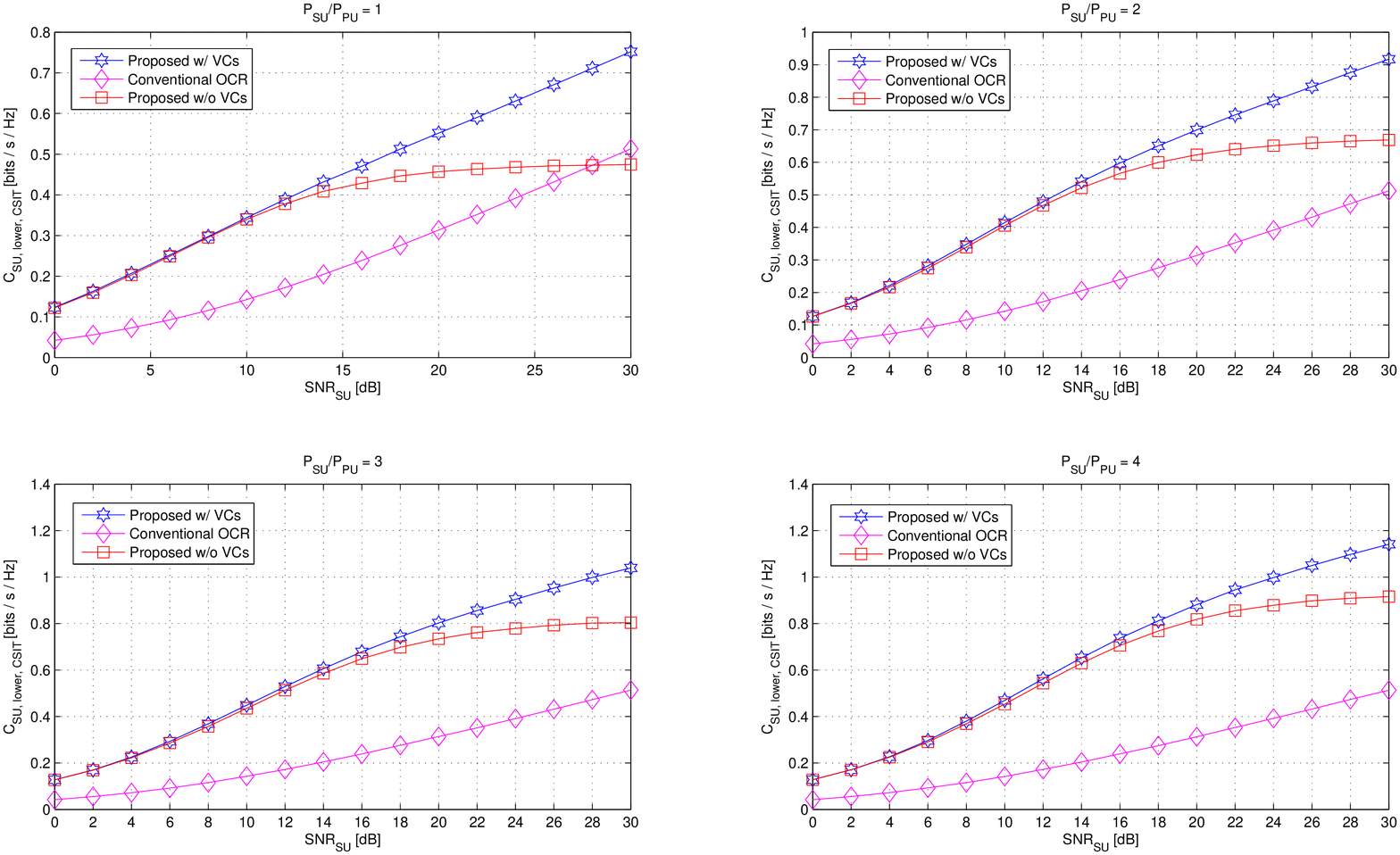}
\caption{$\CSUloweropt$ versus $\SNR_{\SU}$ for 
different values of $\PSU/\PPU$ ($d_{12}/d_{14}=0.7$
and CSI at the STx).}
\label{fig:fig_7}
\end{figure*}

\subsection{Performance of the secondary system}

In this subsection, we focus on  the (minimum) achievable rate $\CSUlower$
of the SU [see \eqref{eq:E-capacity-SU}].
In particular, we consider the case when CSI is available
at the STx by depicting $\CSUloweropt$ in 
Figs.~\ref{fig:fig_6} and \ref{fig:fig_7}, as well 
as the case in which the STx has no CSI 
by reporting $\CSUlowerunif$ in 
Figs.~\ref{fig:fig_8} and \ref{fig:fig_9}.
In both cases, we compare two different implementations
of the proposed method: in the former one, referred to 
as ``Proposed w/ VCs'', according to \eqref{eq:pre-total},
the SU transmits on both the used 
and virtual subcarriers of the PU; in the latter one, 
referred to as ``Proposed w/o VCs'', 
the SU sends its symbols only over 
the used subcarriers of the PU, i.e., 
$\Gcal=\mathbf{O}_{M \times \Mvc}$ in \eqref{eq:pre-total}.
Additionally, as a performance comparison, we report 
the exact ergodic capacity $\CPUd$ of the OCR scheme,
referred to  as ``Conventional OCR'', 
when the SU transmits only on the VCs of the PU,
i.e., $\widetilde{\F}(n)=\mathbf{O}_{P \times P}$
in \eqref{eq:pre-total};
we also plot the
worst-case capacity of the NORC scheme 
\cite{Verde2} when no CSI is available, referred to  as ``NOCR \cite{Verde2}'',
which is obtained from
\eqref{eq:pre-total} by setting $\Lpre=0 \Rightarrow N=1$, $\Gcal=\mathbf{O}_{M \times \Mvc}$, 
and $\Acal= \sqrt{a} \, [1, \ldots, 1]^\trasp \in \setR^M$,
where $a$ is given by \eqref{eq:aa} with $g=0$.

\begin{figure*}[t!]
\centering
\includegraphics[width=0.95\linewidth, trim=20 40 40 20]{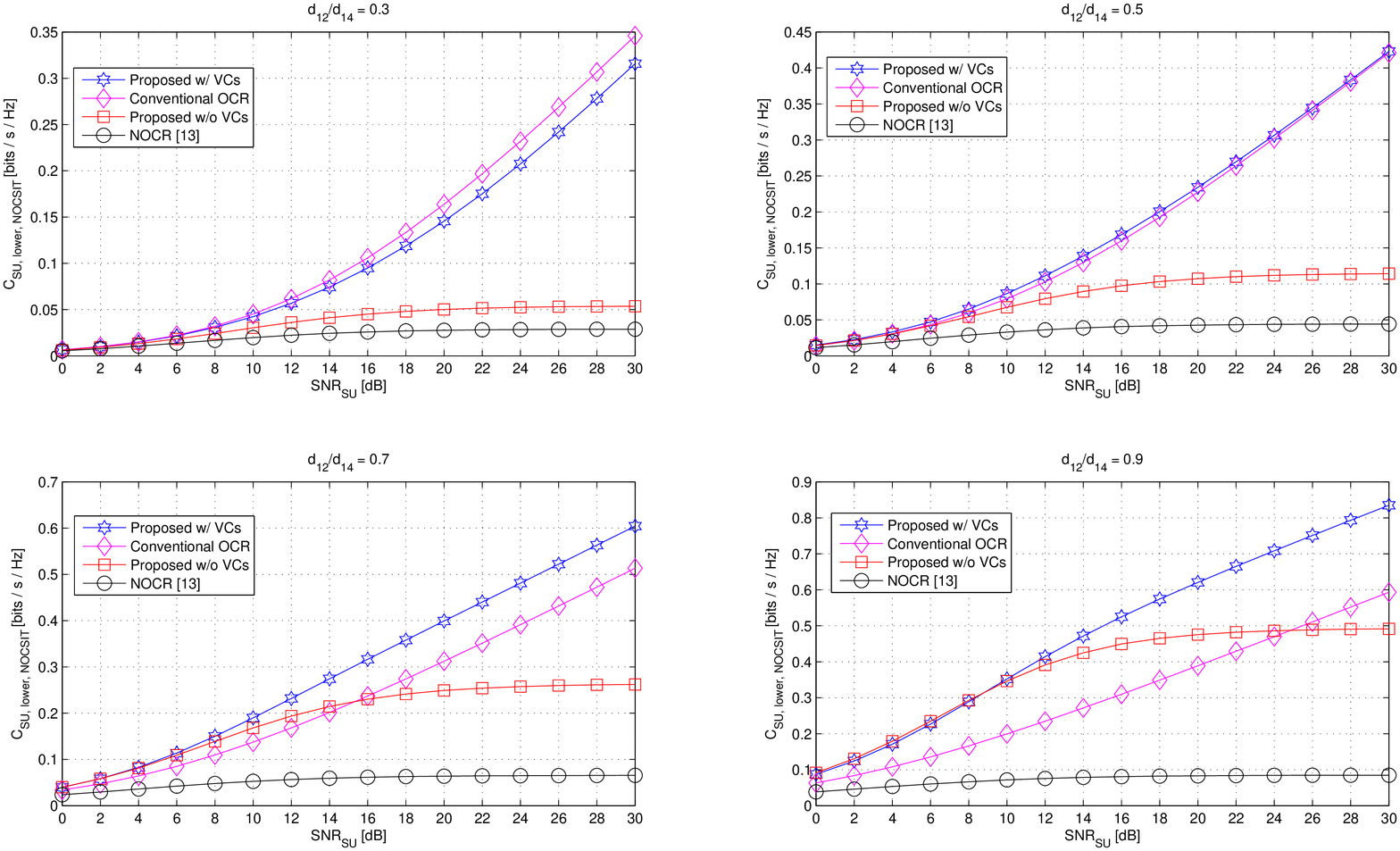}
\caption{$\CSUlowerunif$ versus $\SNR_{\SU}$ for 
different values of $d_{12}/d_{14}$ ($\PSU/\PPU=1$
and no CSI at the STx).}
\label{fig:fig_8}
\end{figure*}
\begin{figure*}[t!]
\centering
\includegraphics[width=0.95\linewidth, trim=20 40 40 20]{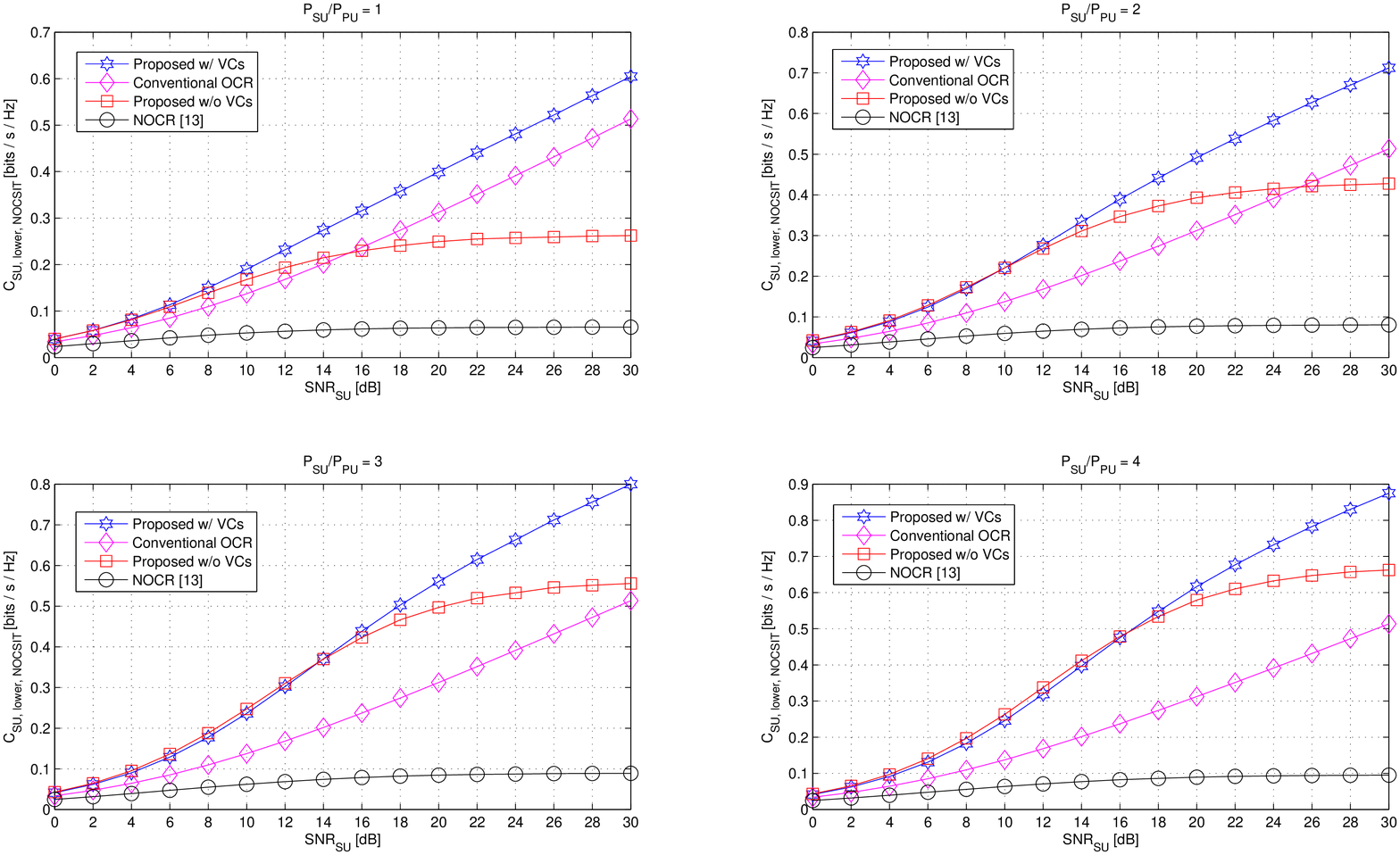}
\caption{$\CSUlowerunif$ versus $\SNR_{\SU}$ for 
different values of $\PSU/\PPU$ ($d_{12}/d_{14}=0.7$
and no CSI at the STx).}
\label{fig:fig_9}
\end{figure*}

Figs.~\ref{fig:fig_6} and \ref{fig:fig_8} depict 
the capacity performance 
as a function of $\SNR_{\SU} \eqdef \PSU/\sigma^{2}$ for
different values of the ratio $d_{12}/d_{13}$, with $\PSU/\PPU = 1$.
Results show that, regardless of the availability of CSI at the STx, the performance of the proposed schemes 
(with and without VCs) rapidly improves when the STx
is moving away from the PTx and, at the same time, it is approaching the SRx.
This is a consequence of the fact that, when the distance $d_{12}$ between 
the PTx and the STx tends to be smaller than the distance $d_{24}$ between
the STx and the SRx, the signal-to-interference ratio at the SRx 
increases. The conventional OCR scheme is able to compete 
with the proposed scheme with VCs only
when the interference 
generated by the PU transmission over the $1 \to 4$ link
dominates the SU signal, i.e., the STx is too close to the PTx
and, at the same time, too far from the SRx.
In particular, when $d_{12}/d_{13} =0.3$, the conventional OCR scheme slightly outperforms 
the proposed scheme with VCs if there is no CSI at the STx (see Fig.~\ref{fig:fig_8}).
The motivation is that the uniform power allocation is suboptimal 
(in the information-theoretic sense) for the SU  when many
subchannels are heavily contaminated by the PU interference. 
This problem is circumvented if CSI is available at the STx and, hence, 
power can be optimally allocated on the subcarriers
(see Fig.~\ref{fig:fig_6}). 
Underneath all of this,
it is noteworthy that the OCR scheme necessarily requires the presence of
VCs (i.e., spectrum holes) in the PU signal, whose presence might be difficult to 
reliably detect in practice. In contrast, our scheme can achieve satisfactory data 
rates even without exploiting the (possible) presence of VCs, especially
when the STx is sufficiently close to the SRx. In particular, 
when $d_{12}/d_{14} > 0.3$,
the proposed scheme without VCs ensure a significant increase in data rate
with respect to the NOCR scheme proposed in \cite{Verde1,Verde2}
(see Fig.~\ref{fig:fig_8}), which not only is unable to exploit the presence of the VCs, but also assumes the transmission of one SU symbol per PU data block,
i.e, $N=1$, and does not carry out the precoding of the SU data.

Figs.~\ref{fig:fig_7} and \ref{fig:fig_9} report 
the capacity performance 
as a function of $\SNR_{\SU}$ for
different values of the power ratio $\PSU/\PPU$, with 
$d_{12}/d_{14}=0.7$. Overall, it is evident that,
compared to the other considered schemes, 
the performance advantage offered by the proposed
scheme (with and without VCs) becomes more and
more marked when $\PSU/\PPU$ increases. We remember
that  an increase in the power
ratio $\PSU/\PPU$ is also beneficial for the PU system (see Fig.~\ref{fig:fig_4}).

For example, with reference to a primary Wi-Fi system with 
$1/T_c=20$ MHz, when $\PPU=\PSU$, $d_{12}/d_{14}=0.7$, 
and $\SNR_{\SU} = 20$ dB, it results
from Figs.~\ref{fig:fig_6} and \ref{fig:fig_8}
that the SU capacity of the proposed scheme with VCs
is at least equal to $11$ Mbps with
CSIT and $8$ Mbps with no CSIT, 
whereas, when $\PSU$ is 
twice $\PPU$, these gains go up at least to $14$ Mbps
and $10$ Mbps (see Figs.~\ref{fig:fig_7} and \ref{fig:fig_9}), respectively.

\section{Conclusions}
\label{sec:concl}

We proposed a spectrum sharing scheme which allows the SU to concurrently transmit within the overall bandwidth of the PU system, by  
generalizing and subsuming as a particular case existing
OCR and NOCR approaches. 
Contrary to the classical NOCR paradigm, 
a key feature of the proposed scheme is that the concurrent SU
transmission improves (rather than degrades)
the performance of the PU system, under reasonable conditions.
Another remarkable result is that, if the SU is willing to spend extra transmit power, it can obtain a multicarrier link with a significant data rate.
Such a performance might be further improved by assuming that
the STx has perfect knowledge of the relevant channel parameters,
which allows one to use the waterfilling solution for precoding its 
information-bearing data. 

%%%%%%%%%Appendices%%%%%%%%%%%%%%%%%%%%%

\appendices

\section{Proof of Lemma~\ref{lem:cpu}}
\label{app:lemma-cpu}
Since $\|\bm{\alpha}_m\|^2$ in \eqref{eq:a} is a strictly increasing function
of $\PSU$, $\forall m \in \mathcal{M}$. 
it is sufficient to show that
the first-order partial derivative of $\CPUlower$ with respect to 
$\|\bm{\alpha}_m\|^2$
is non-negative, $\forall m \in \mathcal{M}$.
Starting from \eqref{eq:EE-capacity}, one has
\be
\frac{\partial}{\partial \|\bm{\alpha}_m\|^2} \, \CPUlower =
\frac{\log_2(e)}{M} \sum_{m \in \IPUuc}
\frac{\partial}{\partial \|\bm{\alpha}_m\|^2} \, \expect[\Psi(\ASNRP)] \: .
\ee
At this point, let $X_m \eqdef (\bm{\alpha}^\herm_m \, \xsuuc)/\|\bm{\alpha}_m\|$,
we can equivalently rewrite \eqref{eq:gamma3m} as follows
$\ASNRP =
\ASNRd \,
\left({1 + Z_m \, \|\bm{\alpha}_m\|^2  \, \frac{\sigma_{12}^2}{\sigma_{13}^{2}}}\right)/
\left({1+ \|\bm{\alpha}_m\|^2 \, \sigma_{23}^2  \, \frac{\sigma_{v_2}^2}{\sigma_{v_3}^{2}}}\right)$,
where the random variable $Z_m \eqdef |H_{23}(m)|^2 \cdot |X_m|^2$
is the product of two independent exponential random variables
with mean $\sigma_{23}^2$ and $1$, respectively, whose probability
density function is denoted by $f(z)$. It can be seen \cite{Papoulis} that
$f(z) \equiv 0$ for $z <0$, whereas
\be
f(z) = \frac{1}{\sigma_{23}^2}
\int_{0}^{+ \infty} \frac{1}{x} \, e^{-\left(\frac{x}{\sigma_{23}^2}+
\frac{z}{x}\right)} \, {\rm d}x =
\frac{2}{\sigma_{23}^2} \,
K_0\left(\frac{\sqrt{z}}{\sigma_{23}}\right) \: ,
\quad
\text{for $z \ge 0$}
\ee
where we have also used the result reported in footnote~\ref{foot:bessel}.
It results that
\be
\frac{\partial}{\partial \|\bm{\alpha}_m\|^2} \, \expect[\Psi(\ASNRP)] =
\frac{\partial}{\partial \|\bm{\alpha}_m\|^2} \int_{0}^{+\infty}
\left[ \int_{0}^{+\infty} e^{-u} \,  \psi(\|\bm{\alpha}_m\|^2,z,u)  \, \mathrm{d}u \right] f(z) \, {\rm d}z
\label{lem:dev-int}
\ee
where 
$\psi(\|\bm{\alpha}_m\|^2,z,u) \eqdef \ln\left[1+
\ASNRd \,
\left({1 + z \, \|\bm{\alpha}_m\|^2 \, \frac{\sigma_{12}^2}{\sigma_{13}^{2}}} \, u\right)
/\left({1+ \|\bm{\alpha}_m\|^2 \, \sigma_{23}^2  \, \frac{\sigma_{v_2}^2}{\sigma_{v_3}^{2}}}\right) \right]$.
As a consequence of the Lebesgue's dominated convergence (see, e.g., \cite{Rudin_book}), we can interchange the order of differentiation and double integration in \eqref{lem:dev-int} because it holds that

\begin{enumerate}[(i)]

\itemsep=0mm

\item
$\psi(a,z,u)$ is differentiable for any $a>0$, $z \ge 0$, and $u \ge 0$, and it results that
\be
\frac{\partial}{\partial a} \psi(a,z,u) =
\left(1+
\ASNRd \,
\frac{1 + z \, a \, \frac{\sigma_{12}^2}{\sigma_{13}^{2}}}
{1+ a \, \sigma_{23}^2  \, \frac{\sigma_{v_2}^2}{\sigma_{v_3}^{2}}}\, u\right)^{-1}
\ASNRd \, \frac{z \, \frac{\sigma_{12}^2}{\sigma_{13}^{2}}-\sigma_{23}^2  \, \frac{\sigma_{v_2}^2}{\sigma_{v_3}^{2}}}{\left(1+ a \, \sigma_{23}^2  \, \frac{\sigma_{v_2}^2}{\sigma_{v_3}^{2}}\right)^2} \, u \: ;
\ee

\item
$e^{-u} \, \frac{\partial}{\partial a}  \psi(a,z,u)$ is summable
with respect to $u$ in $(0, + \infty)$ $\forall a>0$ e $z \ge 0$;

\item
the function
$f(z) \int_{0}^{+\infty} e^{-u} \,  \frac{\partial}{\partial a} \psi(a,z,u)  \, \mathrm{d}u$
is summable with respect to $z$ in $[0,\infty)$ $\forall a >0$.

\end{enumerate}
Since $z \ge 0$, $u \ge 0$, $f(z) \ge 0$, and
$e^{-u}>0$, it is readily proven that
${\partial}/{\partial \|\bm{\alpha}_m\|^2} \, \CPUlower \ge 0$ and, hence,
$\CPUlower$ is a  monotonically increasing function of $\PSU$, if
\be
z \, \frac{\sigma_{12}^2}{\sigma_{13}^{2}}-\sigma_{23}^2  \, \frac{\sigma_{v_2}^2}{\sigma_{v_3}^{2}} \ge 0
\, \Longleftrightarrow \,
z \ge \sigma_{23}^2  \, \frac{\sigma_{13}^2}{\sigma_{12}^{2}} \, \frac{\sigma_{v_2}^2}{\sigma_{v_3}^{2}} \: .
\label{lem:cond}
\ee
Since $\sigma_{23}^2$ is bounded when
$({\sigma_{13}}/{\sigma_{12}}) \,  ({\sigma_{v_2}}{\sigma_{v_3}})
\rightarrow 0$,\sfootnote{For instance, if $\sigma_{v_2}^2 \approx \sigma_{v_3}^2$ and assuming the path-loss model $\sigma_{i \ell}^2 = d_{i \ell}^{-\eta}$, it comes from Carnot's cosine law
that $\sigma_{23}^2 \rightarrow \sigma_{13}^2$.}
when condition \eqref{eq:cond-2} holds, inequality \eqref{lem:cond}
is trivially satisfied because it ends up to $z \ge 0$.

%%%%%%%%%%%%ACK%%%%%%%%%%%%%%%%%%%%%%%%

\section*{Acknowledgement}

The authors would like to thank Prof.~Anna Scaglione for the discussion
regarding  the latency issues of the proposed scheme, which helped us 
improve an earlier version of the manuscript. 

%%%%%%%%%%%%%%%%%% Bibliography%%%%%%%%%%%%%%%

\end{document}